\begin{document}


\title{Waddling Random Walk: Fast and Accurate Mining of Motif Statistics in Large Graphs}
\author{\IEEEauthorblockN{Guyue Han and Harish Sethu}\\
\IEEEauthorblockA{Department of Electrical and Computer Engineering\\
Drexel University\\
Philadelphia, PA 19104-2875\\
Email: \{guyue.han, sethu\}@drexel.edu}
}

\maketitle
\thispagestyle{empty}
~\vskip 0.5in
\begin{abstract}
Algorithms for mining very large graphs, such as those representing online social networks, to discover the relative frequency of small subgraphs within them are of high interest to sociologists, computer scientists and marketeers alike. However, the computation of these network motif statistics via naive enumeration is
infeasible for 
either its prohibitive computational costs or access
restrictions on the full graph data. Methods to estimate the motif
statistics based on random walks by sampling only a small fraction of
the subgraphs in the large graph address both of these challenges. In
this paper, we present a new algorithm, called the {\em Waddling Random
Walk (WRW)}, which estimates the concentration of motifs of any size. It derives its name from the fact that it sways a little
to the left and to the right, thus also sampling nodes not directly on
the path of the random walk. The WRW algorithm achieves its
computational efficiency by not trying to enumerate subgraphs around
the random walk but instead using a randomized protocol to sample
subgraphs in the neighborhood of the nodes visited by the walk. In
addition, WRW achieves {\em significantly} higher accuracy (measured by the
closeness of its estimate to the correct value) and higher precision
(measured by the low variance in its estimations) than the current
state-of-the-art algorithms for mining subgraph statistics. We illustrate these advantages in speed,
accuracy and precision using simulations on well-known and widely used
graph datasets representing real networks. 
\end{abstract}

%
%

%
%







\newpage
\section{Introduction}
The analysis of large graphs, such as those representing online
social networks, is of increasing scholarly interest to
sociologists, mathematicians, economists, computer scientists and
marketeers \cite{TirHal2015}. In particular, mining of large graphs for their microstructure describing patterns of relationships between neighboring
vertices, is of significant interest to researchers in data mining
\cite{ChuChe2011,EleSha2015,ZouHol2010,SilMei2010}. This
microstructure is best captured by motif or graphlet statistics, i.e.,
the relative frequencies with which different small subgraphs of a
certain size appear in the large graph
\cite{MilShe2002,BhuRah2012,UgaBac2013,JhaSes2015,PinJin2015,SahHas2015,WanLui2014,WanLui2015}. For
example, 
the clustering coefficient (the number 
of triangles in relation to the number of wedges) has long
served as an important metric in sociometry and social 
network analysis \cite{ChaFal2006,HarKat2013}. In fact, the relative
frequencies of network motifs are indicative of important properties
of graphs such as modularity, the tendency of nodes in a network to
form tightly interconnected communities, and even play a role in the
organization and evolution of networks
\cite{MilShe2002}. Knowledge of these motif statistics
combined with homophily, the tendency of similar nodes to connect to
one another, add to the ability of businesses such as Facebook to
better mine their graphs and monetize their social platforms through targeted advertisements
\cite{WanCha2013}. 

Computing motif statistics, however, is rendered difficult by two
challenges: one computational and the other having to do with
restricted access to the full graph data. The computational challenge
arises because accurate computation of the relative frequencies of
different motifs requires enumeration of all the induced subgraphs and
checking each for isomorphism to known motif types. The time
complexity of enumerating all induced subgraphs of size $k$ in a graph
with $V$ vertices and $E$ edges is exponential in $k$ with an upper bound of
$O(E^k)$ and a lower bound of $O(Vc^{k-1})$ \cite{ItzMog2007}. Even when $k$ is as small as $4$, in a
graph with only millions of edges, the number of motifs can reach
hundreds of billions. The other problem is one of restricted access because the data on many
large graphs, especially online social networks, can only be obtained
piecemeal via the platform's public interface encapsulated in its API
for developers on the platform. One common query allowed by most
social network APIs is one that returns the list of neighbors of a
node --- a feature that allows random walks on these large graphs even
when the full graph data is unavailable \cite{GjoKur2010}.

The computational and the access challenges above motivate the need
for an approach to estimating motif statistics via sampling the graph
using a random walk and checking only a small fraction of all the
induced subgraphs for isomorphism
\cite{GjoKur2010,HarKat2013,RibTow2010,WanLui2014}.

\subsection{Problem statement}

Consider a connected, undirected graph $G=(V,E)$ with vertex set $V$
and edge set $E$. We assume that information about the graph can only be
ascertained through querying each node separately for a list of its
neighbors. 

For convenience and clarity, we denote each motif by a unique 2-tuple,
$M(k, m)$, where $k$ is the number of vertices in the motif and $m$ is
the motif id which uniquely identifies a motif given $k$. Fig.~\ref{fig:motifs} illustrates all
motifs with $k\leq 5$. 

Let ${\bf S}(k)$ denote the set of all connected induced subgraphs
with $k$ vertices in $G$. Similarly, let ${\bf S}(k,m)$ denote the set
of all connected induced subgraphs which are isomorphic to motif
$M(k,m)$. Now, the motif statistics or motif concentrations are
given by the relative frequencies of each of the motif types:
\[
C(k,m) = \frac{|{\bf S}(k,m)|}{|{\bf S}(k)|}
\]

Given a large graph $G$, the problem considered in this paper is one 
of determining $C(k,m)$ for any $k$ and $m$ by visiting nodes
in the graph only through its public interface via a random walk. The
goal is to make an estimate that is accurate and precise while
visiting as few nodes as possible.  

\subsection{Related Work}

\begin{figure}[!t]
\begin{center}
\includegraphics[width=5in]{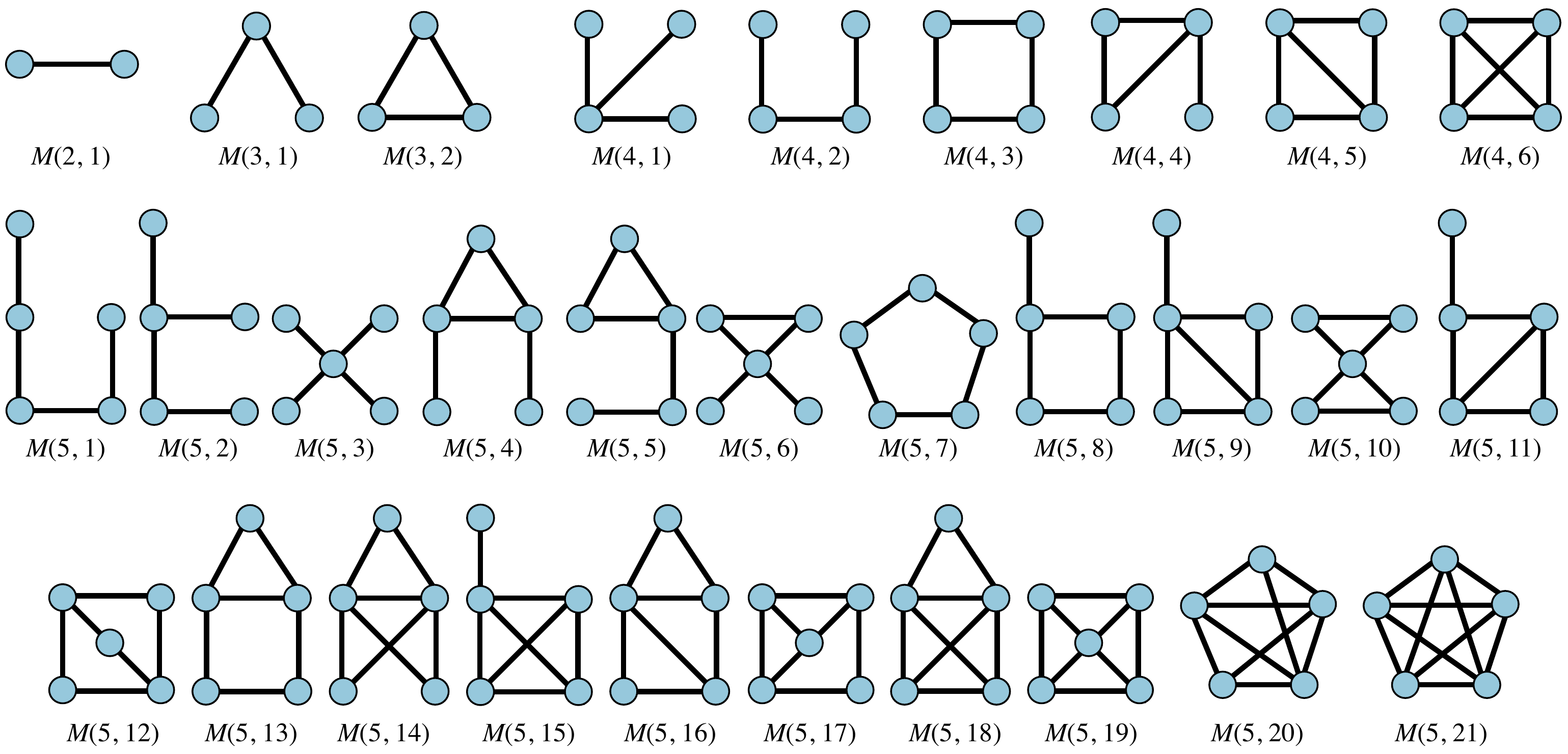} 
\caption{All 2, 3, 4 and 5 vertices undirected motifs.}
\label{fig:motifs}
\end{center}
\end{figure}  

The earliest work on motifs in large graphs began with studies of
triadic properties such as triangle counts and the global clustering
coefficient \cite{WasFau1994,ChaFal2006}. Since then, a large body of
work has focused on understanding and estimating the properties of
graphs related to $3$-node motifs. Yet, computing the accurate
statistics of even these smallest of motifs (wedges and triangles) is
prohibitively expensive for large graphs, inspiring multiple
efforts based on making estimates using edge sampling
\cite{TsoKan2009,AhmDuf2014,JhaSes2015space}.  
The class of approaches based on random walks, however, solve not only 
the computational challenge but also the typical restrictions imposed
on full access to the graph --- they allow piecemeal collection of
data by walking the graph querying a node at a time for its list of
neighbors \cite{GjoKur2010,HarKat2013}. 

A smaller but increasing body of work has tried to develop graph sampling methods that apply to motifs of size larger than three
\cite{Wer2006,BhuRah2012,RahBhu2012,WanLui2015,JhaSes2015}. Applications in
bioinformatics, in particular, have inspired these efforts due to the
need for motif detection and motif-related computations in biology
\cite{KimLi2011,PanRom2013}. There have been at least two
classes of approaches in the estimation of the statistics of
larger-size motifs: one based on edge sampling and the other based on
random walks. Edge sampling approaches are able to reduce the
computational complexity of making an estimation, but they usually
require knowledge of global properties of the graph (such as the total
number of edges in the graph) or they require access to the full
graph. Only methods based on random walks rely entirely on the public interface of live
networks and are able to address both the computational and the access
challenges mentioned in the previous section. 

One approach based on random walks uses the Metropolis-Hastings method \cite{StuRej2009} which can collect uniformly random nodes to infer motif
statistics. However, since nodes selected uniformly randomly may not necessarily induce connected subgraphs, a better approach is to build a graph
of connected induced subgraphs (CIS) and conduct a random walk on this
CIS graph \cite{WanLui2014,BhuRah2012,SahHas2015}. Two subgraphs in
this CIS graph are directly connected by an edge if they differ in
only one node in the original graph. Starting from one 
subgraph, one can move to a neighboring subgraph by dropping and
adding a node without having pre-computed the entire graph of
subgraphs. This approach to a subgraph random walk is improved in
\cite{BhuRah2012} using a Metropolis-Hastings based sampling method
to perform a uniform sampling of CISs in the large graph, leading to a
Markov Chain Monte Carlo sampling method for estimating the motif
frequency distribution of 3-node, 4-node and 5-node motifs. The use of
Metropolis-Hastings for walking the CIS graph is further refined in
\cite{SahHas2015} to collect motif statistics, in an algorithm called
the Metropolis-Hastings Random Walk (MHRW).

An alternative approach, also based in random walks on CIS graphs, is
one that avoids the Metropolis-Hastings method for its inefficiency
involving randomized selections and the consequent rejections of nodes in
determining the next step in the walk. Instead, in this approach, the
unbiased sampling of Metropolis-Hastings method is replaced with the
use of the Horvitz-Thompson construction to unbias the estimation
\cite{HorTho1952}. Such a method is used in \cite{WanLui2014} which
develops the Pairwise Subgraph Random Walk (PSRW), which cleverly
samples a set of CISs with a smaller number of $k-1$ nodes to estimate
the concentrations of motifs with $k$ nodes.

Both PSRW and MHRW are capable of estimating motif concentrations of
any size. As presented in \cite{WanLui2014,SahHas2015}, these two
algorithms are significantly better than the existing methods in terms
of accuracy and speed. However, both of these algorithms rely on some
subgraph enumeration which adds significantly to the runtime. The
Waddling Random Walk (WRW), proposed in this paper, however, avoids
such enumeration and instead uses a randomized approach to sample
subgraphs and reduce computational costs. WRW achieves a significant
improvement in speed as well as in the accuracy and the precision of
its estimates.

\subsection{Contributions}

We present a new random walk algorithm, called {\em Waddling Random
  Walk (WRW)}, named so because it sways left and right and also
samples nodes not directly in the path of the random
walk. In Section~\ref{sec:rationale}, we develop the theoretical
foundation for the algorithm and show that motif statistics can be
inferred from the probability with which we sample sets of nodes and
whether or not the subgraphs induced by those nodes are isomorphic to
the motifs of interest.

Section~\ref{sec:wrw} presents the WRW algorithm to sample $k$-node
motif statistics for any $k$ along with a pseudocode description of
it. The algorithm relies on a randomized waddling protocol to sample
nodes in the neighborhood of the random walk. A key strength of the
algorithm is that the waddling protocol can be customized for specific
access or other constraints, with the only requirement being that it
be a {\em randomized} protocol so that the probabilities of sets of
nodes selected by the protocol can be computed. Section~\ref{sec:wrw}
also describes the specific version of the algorithm for collecting
$4$-node and 5-node motif statistics. 

Section~\ref{sec:results} describes a thorough performance analysis of
WRW in comparison to the best two algorithms which address the same
problem: PSRW introduced in 2014 \cite{WanLui2014} and MHRW introduced
in 2015 \cite{SahHas2015}. We show that WRW achieves a significantly
improved running time. Most importantly, we show, using graph datasets
representing real networks, that the WRW algorithm achieves
significantly higher accuracy (in terms of the closeness of its answers to
the actual values) and higher precision (in terms of the variance in
its estimations). We also show that WRW can estimate the number of motifs of any type if the size of the network is known or is estimated.

Section~\ref{sec:conclusion} concludes the paper.

\section{The Rationale}
\label{sec:rationale}

In this section, we build the theoretical rationale for the Waddling
Random Walk. In particular, we illustrate the need for waddling during
the random walk by first considering a simpler algorithm without waddling.

\subsection{Preliminaries and Notation}

Given a graph $G=(V,E)$, let $v \in V$ denote a vertex in $G$ and let
$N(v)$ denote the set of neighbors of vertex $v$ in $G$. Let $d(v)$
denote the degree of vertex $v$ and let $D=\sum_{v\in V}d(v)$ denote the sum of the
degrees of all the vertices in $G$. 

Consider the $k$-node motif $M(k,m)$. Let $l(k,m)$
denote the number of vertices in the shortest path (allowing repeated
vertices) in motif $M(k,m)$ that includes all of the motif's $k$
vertices. For example, $l(4,1)$ is $5$ while $l(4,2)$ is $4$. 

A path is called simple if it does not have any repeated vertices. Let
$L(k,m)$ denote the number of vertices in the longest simple path of
motif $M(k,m)$. For example, $L(4,1)$ is $3$ while $L(4,2)$ is $4$.

Let $T_k$ denote the number of different $k$-node motifs. For example,
$T_3 = 2$, $T_4 = 6$ and $T_5 = 21$.

Let $P_r(k,m,s)$ denote the number of different paths with $s$ vertices
(allowing repeats) in motif $M(k,m)$ which include all of the $k$
nodes. For example, $P_r(3,1,3)$ is $2$ while $P_r(3,2,3)$ is
$6$. Similarly, $P_r(4,1,5)$ is $6$, $P_r(4,2,4)$ is $2$ while $P_r(4,6,4)$
is $24$.

Consider a random walk on $G$, $(r_1, r_2, \dots )$, where $r_1$
denotes the starting node and $r_i$ denotes the node visited in step
$i$. Let $t$ denote the number of steps in the random walk required to
reach the mixing time \cite{LovWin1998}, i.e., when the
probability of visiting a given node in a given step reaches a
stationary distribution and is largely independent of the initial node
$r_1$ chosen to begin the random walk. In many real networks,
including social networks in particular, $t$ is small and usually of
the order of a few hundreds of nodes \cite{HarKat2013,StuRej2009}.

Let $\phi_i(v_j)$ denote the probability that the random walk visits
node $v_j$ in step $i$. For $i > t$, the mixing time, we can
drop $i$ from the notation and denote by $\phi(v_j)$ the probability
that the random walk visits node $v_j$ in any given step. In the rest
of this paper, we assume that all the computations are based on
observations made in the random walk after the mixing time is
reached. As shown in \cite{lovasz1993random}, in a random walk, 
$\phi(v_j)$ is given by:
\begin{equation}
\phi(v_j) = \frac{d(v_j)}{D}
\end{equation}

Let ${\bf R}^{(s)}$ denote the set of all sequences of $s$ nodes which
may appear in a random walk in $G$; it is the set of all $s$-node
paths (allowing revisits to nodes) in $G$. Let $X^{(s)} = (x_1, x_2,
\dots, x_s)$ represent a sequence of $s$ nodes such that $X^{(s)} \in {\bf R}^{(s)}$.
At any given point in the random walk, let $\phi(X^{(s)})$ denote the
probability that it steps through exactly the sequence of nodes $X^{(s)}$. Then, $\phi(X^{(s)})$
is given by:
\begin{eqnarray}
\nonumber \phi(X^{(s)}) & = &
                                         \frac{d(x_1)}{D}\frac{1}{d(x_1)}\frac{1}{d(x_{2})}
                               \cdots \frac{1}{d(x_{s-1})}\\ 
& = & \frac{1}{D}\frac{1}{d(x_2)} \cdots \frac{1}{d(x_{s-1})} \label{eqn:selectionProb}
\end{eqnarray}

Let $H(X^{(s)})$ denote the subgraph in $G$ induced by the set of
vertices in the random walk sequence $X^{(s)}$. If the number of distinct nodes in $X^{(s)}$ is $k$, then
$H(X^{(s)})$ is isomorphic to one of the $k$-node motifs. Define the
function $\omega(X^{(s)},k,m)$ as follows to indicate if $H(X^{(s)})$ is
isomorphic to motif $M(k,m)$:
\begin{equation*}
\omega(X^{(s)},k,m)=\left\{
\begin{array}{ll}
  1   &    \mbox{if $H(X^{(s)})$ is isomorphic to $M(k,m)$,}\\
  0   &    \mbox{otherwise.}
\end{array} 
\right. 
\end{equation*}
Note that the function $\omega(X^{(s)},k,m)$ does not depend on the order
of the nodes in the sequence $X^{(s)}$.

As we traverse nodes in the random walk, we can observe the sequences
of nodes visited and compute the probability that those sequences are
encountered using the expression in Eqn.~(\ref{eqn:selectionProb})
above. Then, we can evaluate $\omega(X^{(s)},k,m)$ for those sequences
to check if they induce a subgraph isomorphic to a certain
motif. The rest of this section explains how we can infer motif
statistics from these quantities.

\subsection{Motif statistics without waddling}

Given a motif $M(k,m)$ which appears in $G$, there are
$P_r(k,m,s)$ ways in which an $s$-node path may traverse this motif
visiting all its nodes. 
Therefore, if we sum up the function $\omega(X^{(s)}, k,
m)$ for every path $X^{(s)} \in {\bf R}^{(s)}$, we should get the
total number of motifs of type $M(k,m)$ multiplied by $P_r(k,m,s)$. More
formally,  
\begin{align}
{\displaystyle \sum_{X^{(s)} \in {\bf R}^{(s)}}} \omega(X^{(s)}, k, m) = P_r(k,m,s)|{\bf S}(k,m)| \label{eqn:summation}
\end{align}

For any sequence of nodes on the random walk, $X^{(s)}$, define
$f(X^{(s)})$ as follows using Eqn.~(\ref{eqn:selectionProb}): 
\begin{equation}
f(X^{(s)}) = \frac{1}{\phi(X^{(s)})D} = d(x_2)d(x_3) \cdots d(x_{s-1}) \label{f-nowaddle}
\end{equation}

Let $R_i^{(s)} \in {\bf R}^{(s)}$ denote the sequence of $s$ nodes
visited during steps $i-s+1$ through $i$, i.e., $(r_{i-s+1},
r_{i-s+2}, \dots, r_i)$. Consider the expected value of
$\omega(R_i^{(s)}, k, m)f(R_i^{(s)})$ over the random walk:
\begin{eqnarray}
E\left[\omega(R_i^{(s)}, k, m)f(R_i^{(s)})\right] &=& \sum_{X^{(s)} \in {\bf R}^{(s)}} \phi(X^{(s)}) \left(\omega(X^{(s)}, k, m)f(X^{(s)})\right) \nonumber \\
& =& \left( \frac{1}{D} \right) \sum_{X^{(s)} \in {\bf R}^{(s)}} \omega(X^{(s)}, k, m) \label{eqn:expected-no-waddle}
\end{eqnarray}
Using Eqn.~(\ref{f-nowaddle}) for $f(R_i^{(s)})$ on the LHS and
substituting for the above summation using Eqn.~(\ref{eqn:summation})
on the RHS, we get:
\begin{align}
 E\left[\omega(R_i^{(s)}, k, m) \prod_{j=1}^{s-2}d(r_{i-j})\right]
 = \left( \frac{1}{D} \right) P_r(k,m,s)|{\bf S}(k,m)| \label{eqn:nowaddle}
\end{align}

Eqn.~(\ref{eqn:nowaddle}) suggests a simple algorithm for sampling
motif statistics via a random walk. For each motif type $M(k,m)$, let
$s = l(k,m)$, the number of vertices in the shortest path in the motif
that includes all of its $k$ vertices. As we visit nodes in the random
walk, we can check if the previous $s$ nodes induce a subgraph
isomorphic to $M(k,m)$ to evaluate $\omega(R_i^{(s)}, k,
m)$. Fig.~\ref{fig:nowaddle-example} illustrates motifs recognized by
such an algorithm during a random walk.
\begin{figure}[!t]
\begin{center}
\includegraphics[width=2.5in]{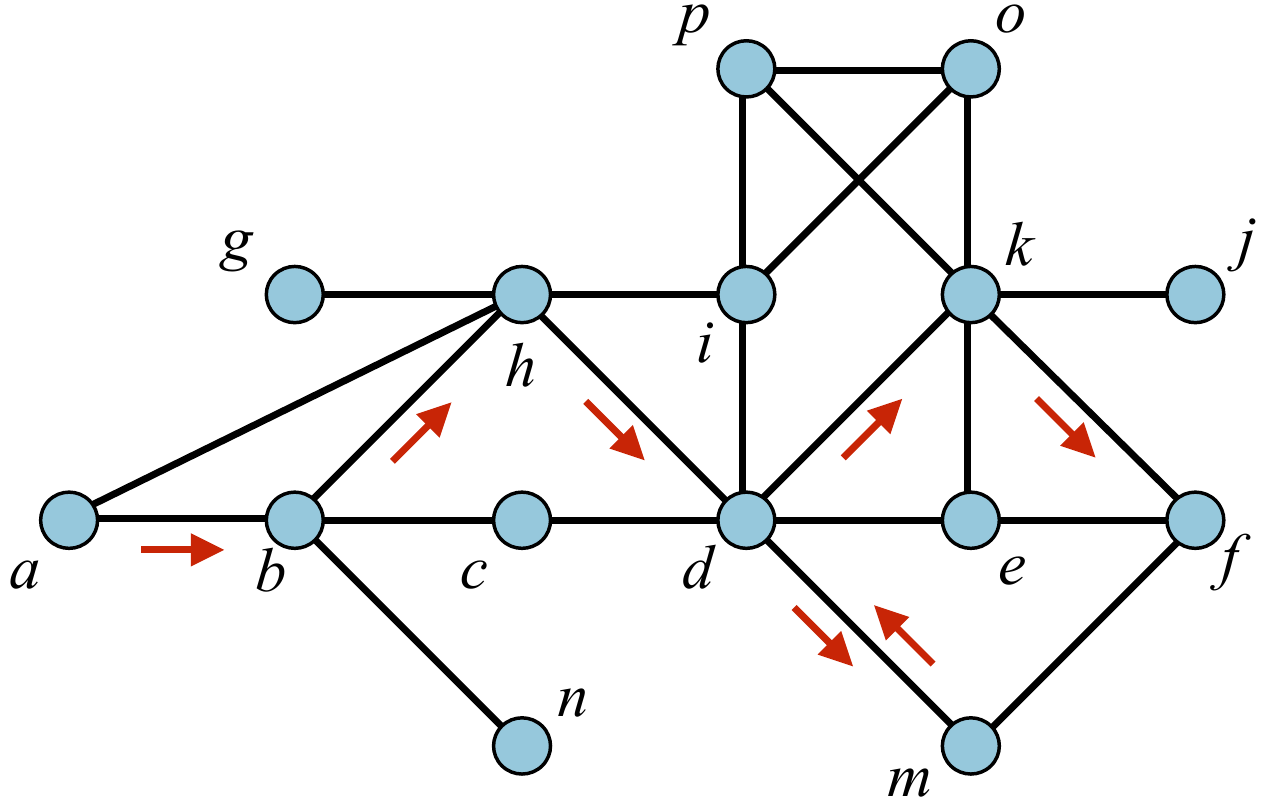}\\~\\
\includegraphics[width=3.2in]{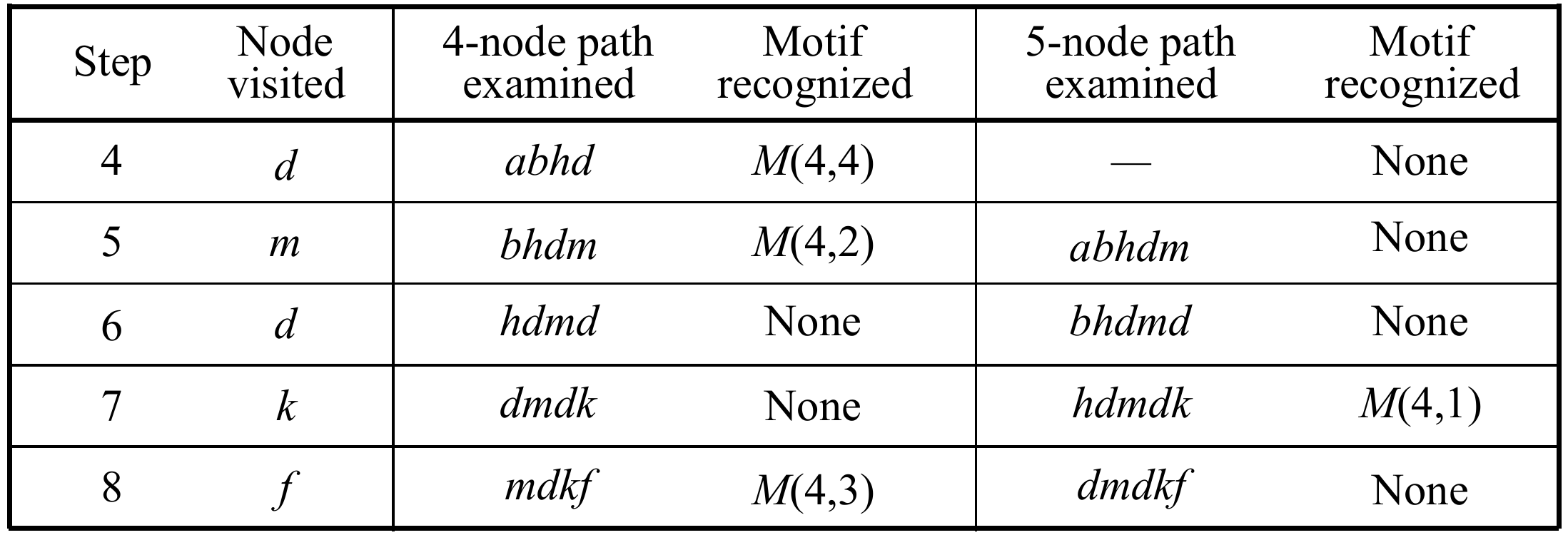} 
\caption{An example to illustrate the collection of $4$-node motif
  statistics in a random walk (shown by red arrows) without a
  waddle. $5$-node paths are examined only to check for motif $M(4,1)$.}
\label{fig:nowaddle-example}
\end{center}
\end{figure} 

At each step, if the isomorphism test passes, we can compute the
product of the degrees of the middle $s-2$ nodes in $R_i^{(s)}$ and
obtain the average of these results to get the LHS in
Eqn.~(\ref{eqn:nowaddle}) for $M(k,m)$, which we denote by
LHS$(k,m)$. In the RHS of Eqn.~(\ref{eqn:nowaddle}), since
$P_r(k,m,s)$ is known for all the motifs and $D$ is a constant, we can
compute the fraction of $k$-node motifs in a graph which are of a
certain type as follows:
\[
C(k,m) = \left( \frac{\mathrm{LHS}(k,m)}{P_r(k,m,s)} \right) \Bigg/ \left(
  {\displaystyle \sum_{j=1}^{T_k} \frac{\mathrm{LHS}(k,j)}{P_r(k,j,s)}} \right)
\]

\begin{table}[!t]
\centering
\caption{A glossary of selected basic terms.\label{table:term}}{
\begin{tabular}{lp{2.4in}}
Notation & Definition \\
\hline  \\[-8pt]
$d(v)$    &Degree of node $v$ in $G$.\\[1pt]
\hline \\[-8pt]
$D$    &Sum of the degrees of all nodes in $G$.\\[1pt]
\hline \\[-8pt]
$T_k$    &Number of different $k$-node motifs.\\[1pt]
\hline \\[-8pt]
$M(k,m)$    &$k$-node motif with id $m$.\\[1pt]
\hline \\[-8pt]
${\bf S}(k,m)$    &Set of all connected induced subgraphs which are isomorphic to $M(k,m)$.\\[1pt]
\hline \\[-8pt]
$l(k, m) $    &Number of vertices in the shortest path in $M(k,m)$ that includes all of its k vertices.\\[1pt]
\hline \\[-8pt]
$L(k,m) $    &Number of vertices in the longest simple path of motif $M(k,m)$.\\[1pt]
\hline \\[-8pt]
$P_r(k,m,s)$          & Number of different paths (allowing repeated
                        nodes) of length $s$ in motif $M(k,m)$\\[1pt]
\hline \\[-8pt]
$R_i^{(s)}$          & Sequence of $s$ nodes visited during steps $i-s+1$ through $i$.\\[1pt]
\hline \\[-8pt]
$H(R_i^{(s)})$          & Subgraph in $G$ induced by the set of vertices in the random walk sequence $R_i^{(s)}$.\\[1pt]
\hline \\[-8pt]
$C(k,m)$          & Motif concentration of $M(k,m)$.\\[1pt]
\hline \\[-8pt]
\end{tabular}}
\end{table}

\subsection{Why waddle?}

The algorithm suggested by Eqn.~(\ref{eqn:nowaddle}) in the previous
section works well when $l(k,m)$ is equal to $k$ but
can become less accurate when $l(k,m)$ is larger than $k$. For example, in
the case of the $4$-star motif or $M(4,1)$, $l(4,1)$ is $5$ and so the
random walk has to take $5$ steps within the motif to encounter and
recognize the motif; this means that motifs with larger $l(k,m)$ would
be encountered and recognized with lower probability, especially so in
large social network graphs with high average degree.

A significant improvement is possible if we allow our random walk to
waddle a little (sway left and right) and query random nodes to the
right and the left of the random walk as well. 
For example, consider the random walk illustrated in
Fig.~\ref{fig:nowaddle-example} to collect $4$-node motif
statistics. Suppose, in addition to the nodes visited on
the random walk, we also query a random neighboring node of each node
visited directly on the random walk. Suppose we query node {\em g}
at the step in which the walk visits node {\em h}. Then, when the walk
visits node {\em d} for the first time, we can recognize the $4$-star
motif $M(4,1)$ induced by nodes {\em h}, {\em b}, {\em g} and {\em d}
in addition to recognizing motif $M(4,4)$ in the same step
induced by nodes {\em a}, {\em b}, {\em h} and {\em d}.

Waddling, since it also examines nodes not in the direct path of the
random walk, allows us to restrict the number of previously visited
nodes along the walk that we examine for isomorphism to a motif
$M(k,m)$ to no more than the length in the number of nodes, $L(k,m)$, of
the longest simple path on the motif. Since $L(k,m) \leq k$, we
will sample motifs with a higher probability during every step of the
walk. Waddling helps count more motifs and thus improves the
accuracy of the motif statistics collected. As we will show in the
next section, for best efficiency, how we waddle (i.e., which other
nodes we query along the random walk and how deep a chain of nodes we
query) depends on the motif for which we are seeking to collect
statistics. But, as long as we can correctly compute the probability
of choosing the set of nodes for which we examine the induced
subgraphs, the methodology detailed in this section can be transferred
to the waddling algorithm to estimate the motif statistics.

\newcommand{\algrule}[1][.2pt]{\par\vskip.3\baselineskip\hrule height #1\par\vskip.3\baselineskip}
\begin{algorithm}[!t]
\caption{Waddling Random Walk}\label{alg:motifalg}
\begin{algorithmic}[1]
\Require{Graph $G = (V,E)$, motif size $k$, motif id $m$, random walk length $n$.}
\Ensure{Motif concentration $C(k,m)$}
\State{$c_m \leftarrow 0$, $1 \leq m \leq T_k$}
\State{Perform random walk until after the mixing time,
  reaching node $r_{i-1}$ at step $i-1$}
\While{$i < n$}
\State{$r_i \leftarrow $} Random node in $N(r_{i-1})$
\For{$m: 1, \dots, T_k$}
\State{$s = L(k,m)$}
\State{$R_i^{(s)} \leftarrow (r_{i-s+1}, \dots, r_i)$}
\If{Nodes in $R_i^{(s)}$ are all distinct}
\If{$s = k$}
\If{$H(R_i^{(s)})$ is isomorphic to $M(k,m)$}
\State{$c_m \leftarrow c_m + {\displaystyle \left(
      \prod_{j=1}^{s-2}d(r_{i-j}) \right) \Bigg/ P_r(k,m,s) }$}
\EndIf
\Else
\State{Pick a random $s$-node path of $M(k,m)$ and map it on the
  nodes in $R_i^{(s)}$}
\State{$W_i^{(k-s)} \leftarrow $ Set of $k-s$ nodes chosen by the
  randomized waddle protocol}
\If{$H(R_i^{(s)} \cup W_i^{(k-s)} )$ is isomorphic to $M(k,m)$}
\State{$c_m \leftarrow c_m + {\displaystyle \left( \frac{\displaystyle
        Z(k,m) \prod_{j=1}^{s-2}d(r_{i-j})}{\phi(W_i^{(k-s)} | R_i^{(s)})
        P_r(k,m,s) P_w(k,m,s) }\right)}$}
\EndIf 
\EndIf
\EndIf
\EndFor
\State{$i \leftarrow i+1$}
\EndWhile 
\State{$c_t \leftarrow {\displaystyle \sum_{j=1}^{T_k} c_j}$}
\State{\textbf{return} $c_m/c_t$}
\end{algorithmic}
\end{algorithm} 
\section{Waddling Random Walk}
\label{sec:wrw}

Algorithm~\ref{alg:motifalg} presents the pseudocode of the Waddling
Random Walk (WRW) to compute the concentrations of $k$-node motifs for
any $k$. In our algorithm, we use $T_k$ different temporary variables, $c_m$
for $1 \leq m \leq T_k$,  in which we record the $T_k$ motif
concentrations, one for each type. Lines 1--2 in the pseudocode
perform necessary initializations, begin the random walk and proceed
until the mixing time is reached.  

Lines 3--23 describe the walk after the mixing time, during
which period we collect the motif statistics. Line $4$ takes the next
step in the random walk to reach node $r_i$. At each step of the walk,
the {\tt for} loop in lines 5--21 loops through the processing 
required for each motif type --- the loop can be further optimized for
computational efficiency; we present the pseudocode as such for
clarity at the expense of some efficiency. Consider a 
motif $M(k,m)$ and let $s = L(k,m)$, the number of vertices in its
longest simple path. Note that $s \leq k$. Let $R_i^{(s)} =
(r_{i-s+1}, \dots, r_i)$ denote the sequence of $s$ nodes visited
during steps $i-s+1$ through $i$. 

Lines 8--20, expressed in generalized form, is the heart of the
algorithm and we will describe these at length. If the nodes in
$R_i^{(s)}$ are {\em not} all distinct, we will not recognize any
motifs of the type being considered and we will move forward to
either check for the next motif type or take the next step in the
random walk if all motif types at the current step have already been
considered.

If the nodes in $R_i^{(s)}$ are all distinct, there are two cases to
consider depending on the motif type: $s=k$ and $s < k$. If $s = k$,
then there is no need to waddle and we can use the approach in the previous section to
check for the isomorphism of $H(R_i^{(s)})$ and $M(k,m)$, and add to the
motif count toward estimation of the motif concentration
$C(k,m)$. Lines 9--12 handle this case when $s=k$.

In the other case when $s <k$, we can map the longest simple path of
the motif $M(k,m)$ onto these $R_i^{(s)}$ nodes (line 14). Note that there
are $P_r(k,m,s)$ different assignments that can accomplish the mapping
and we randomly choose one of them. To test for isomorphism to
$M(k,m)$, we now need at least an additional $k-s$ nodes. This is
accomplished by waddling in line $15$ described in greater
detail below.

Consider a randomized {\em waddle} protocol which queries an
additional set of $k-s$ choices of nodes, $W_i^{(k-s)}$, such that
$R_i^{(s)} \cup W_i^{(k-s)}$ induces a connected subgraph. The
querying of these nodes in $W_i^{(k-s)}$, which may be to the right or the left as
we take the random walk, produces the waddle for which the algorithm
is named. Note that the waddle protocol chooses the nodes
$W_i^{(k-s)}$ randomly and cannot guarantee if the induced subgraph
will be isomorphic to $M(k,m)$ for some $m$ or even if it has exactly
$k-s$ distinct nodes (due to the randomization, it may choose the same
node more than once). A powerful feature of our algorithm is that it
does not actually prescribe a specific waddle protocol --- for the
Waddling Random Walk to work, we only need the waddle protocol to be
randomized. In fact, the waddle protocol may be customized and
optimized for different motifs; we will provide the waddle protocol
optimized for $4$-node and $5$-node motifs later in this section.  

\begin{figure}[!b]
\begin{center}
\includegraphics[width=3.2in]{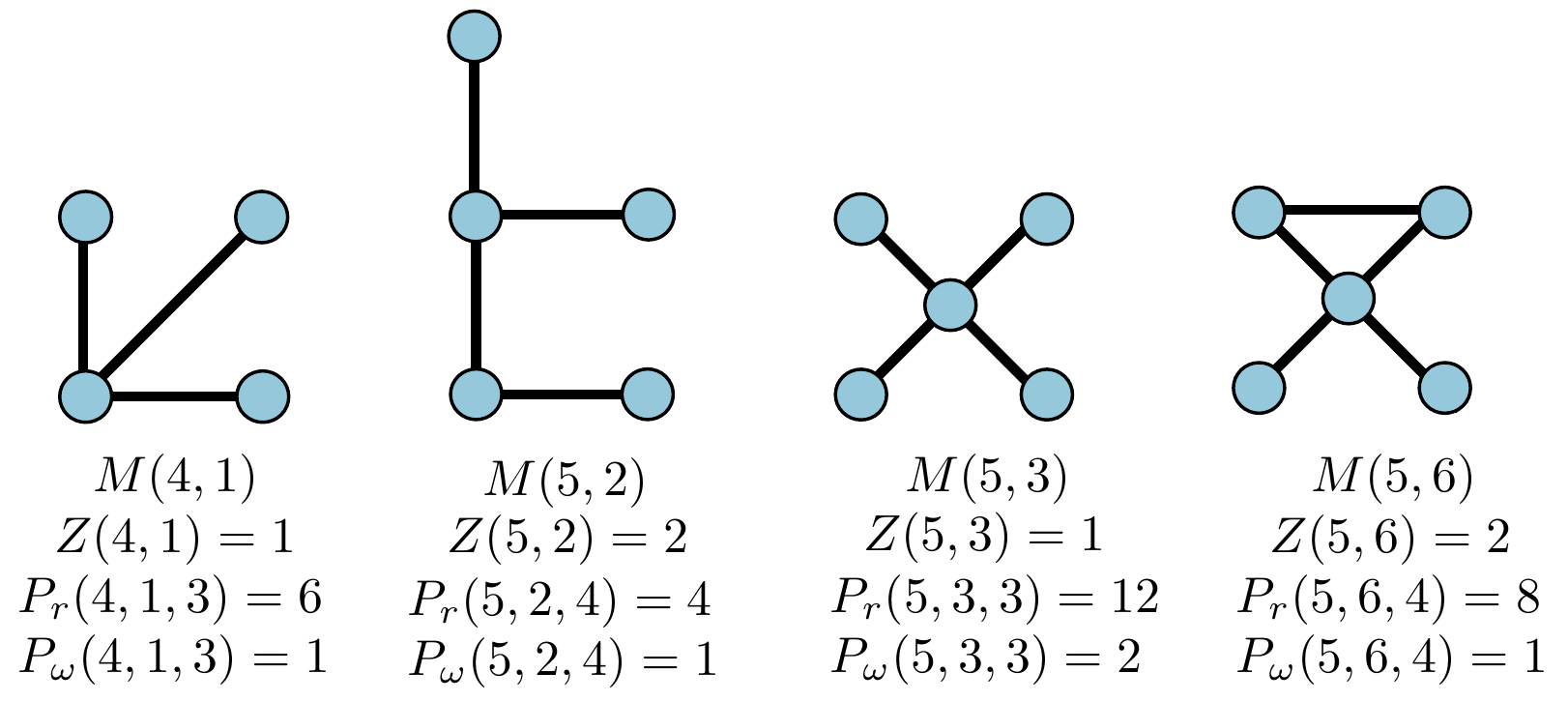}
\caption{$Z(k,m)$, $P_r(k,m,s)$ and $P_w(k,m,s)$ values for some motifs.}
\label{fig:P-example}
\end{center}
\end{figure}   

Let $\phi(W_i^{(k-s)} | R_i^{(s)})$ denote the probability that the 
randomized waddle protocol chooses exactly the nodes in
$W_i^{(k-s)}$ when the random walk visits the nodes in the 
sequence $R_i^{(s)}$. In the $4$-node motif sampling example
illustrated in Fig.~\ref{fig:nowaddle-example}, if $R_i^{(3)} = (b, h,
d)$ and if the waddle protocol randomly chooses a neighbor of $h$ to
include in $W_i^{(1)}$, then the probability $\phi(W_i^{(1)}
|R_i^{(3)} )$ is $1/d(h)$. 

Consider $\phi(R_i^{(s)}, W_i^{(k-s)})$, the probability that
the random walk visits the sequence of nodes $R_i^{(s)}$ and then the
waddle protocol chooses the set of nodes $W_i^{(k-s)}$ at step $i$ of
the random walk. Using Eqn.~(\ref{eqn:selectionProb}), it is given by: 
\begin{equation}
\phi(R_i^{(s)}, W_i^{(k-s)}) = \frac{\phi(W_i^{(k-s)} | R_i^{(s)})}{D\prod_{j=1}^{s-2} d(r_{i-j})} \label{eqn:selectionProbWaddle}
\end{equation}

Let $H(R_i^{(s)} \cup W_i^{(k-s)})$
denote the subgraph induced by the set of nodes in $R_i^{(s)} \cup W_i^{(k-s)}$. As in
the previous section, define the function $\omega(R_i^{(s)} \cup W_i^{(k-s)},k,m)$ as
follows to indicate if $H(R_i^{(s)} \cup W_i^{(k-s)})$ is isomorphic to motif $M(k,m)$: 
\begin{equation*}
\omega(R_i^{(s)} \cup W_i^{(k-s)},k,m)=\left\{ 
\begin{array}{ll}
  1   &    \mbox{if $H(R_i^{(s)} \cup W_i^{(k-s)})$ is isomorphic to $M(k,m)$},\\
  0   &    \mbox{otherwise.}
\end{array} 
\right. 
\end{equation*}

Let ${\bf W}^{(k-s)}$ be the set of all collections of $k-s$ nodes
(allowing repeated nodes) such that if $Y^{(k-s)} \in {\bf 
  W}^{(k-s)}$, then  for some $X^{(s)} \in {\bf R}^{(s)}$, $X^{(s)}
\cup Y^{(k-s)}$ induces a connected subgraph in $G$. 

Depending on the motif type, note that the node-to-node mapping of
$R_i^{(s)} \cup W_i^{(k-s)}$ to the nodes in the motif may not be
possible if the mapping of nodes in $R_i^{(s)}$ to the longest simple
path of the motif is reversed, i.e., if the node mapped to $r_i$ is now
mapped to $r_{i-s+1}$ and vice-versa and so on. Define $Z(k,m)$ as $1$
if the mapping is possible under such a reversal and $2$ otherwise;
note that $Z(k,m)$ is a property of the motif indicating if the motif
is lengthwise symmetric around the longest simple path.

Similarly as in the case of Eqn.~(\ref{eqn:summation}), 
\begin{eqnarray}
& {\displaystyle \sum_{X^{(s)} \in {\bf R}^{(s)}} ~ \sum_{Y^{(k-s)} \in {\bf
  W}^{(k-s)}}} \omega(X^{(s)} \cup Y^{(k-s)}, k, m) \nonumber \\
& ~~~~ = {\displaystyle \left( \frac{1}{D} \right) \left(
  \frac{P_r(k,m,s)P_w(k,m,s)|{\bf S}(k,m)|}{ Z(k,m) } \right)} \label{eqn:summation-waddle}
\end{eqnarray}
where $P_r(k,m,s)$ is the number of different paths of length $s$ in
motif $M(k,m)$, and $P_w(k,m,s)$ is the number of different ways in
which nodes in $Y^{(k-s)}$ can then be mapped on to the nodes not on
the $s$-node path in motif $M(k,m)$. 
While $P_r(k,m,s)$ captures the number of different ways in which
an $s$-node path can be mapped on to the longest simple path of the
motif, $P_w(k,m,s)$ captures the number of different ways one can map
the additional $k-s$ nodes chosen by the waddle protocol on to the
remaining $k-s$ nodes of the motif. Fig.~\ref{fig:P-example} shows
examples of some $k$-node motifs whose longest simple path is of
length $s < k$ and the corresponding values of $Z(k,m)$, $P_r(k,m,s)$ and
$P_w(k,m,s)$.

Using Eqn.~(\ref{eqn:selectionProbWaddle}), define a function $f(.)$
such that:
\[
f(R_i^{(s)}, W_i^{(k-s)}) = \frac{1}{\phi(R_i^{(s)}, W_i^{(k-s)})D}
= \frac{\prod_{j=1}^{s-2}d(r_{i-j})}{\phi(W_i^{(k-s)} | R_i^{(s)})} 
\]
Let $\Omega(k,m,X^{(s)},Y^{(k-s)})$ denote the function given by the
product:
\[
\phi(X^{(s)}, Y^{(k-s)}) \omega(X^{(s)}\cup Y^{(k-s)}, k, m)
  f(X^{(s)}, Y^{(k-s)})
\]

As in Eqn.~(\ref{eqn:expected-no-waddle}), we have the following
expected value:
\begin{align}
& E[\omega(R_i^{(s)} \cup W_i^{(k-s)}, k, m)f(R_i^{(s)},W_i^{(k-s)})] \nonumber \\
& ~~~~ = \sum_{X^{(s)}\in {\bf R}^{(s)}} ~ \sum_{Y^{(k-s)} \in {\bf W}^{(k-s)}}
  \Omega(k,m,X^{(s)},Y^{(k-s)})
\end{align}
Using Eqns. (\ref{eqn:selectionProbWaddle}) and
(\ref{eqn:summation-waddle}), we get:
\begin{align}
& E\left[\omega(R_i^{(s)} \cup W_i^{(s)}, k, m)
  \frac{\prod_{j=1}^{s-2}d(r_{i-j})}{\phi(W_i^{(k-s)} | R_i^{(s)})}
  \right] \nonumber \\
& ~~~~ = {\displaystyle \left( \frac{1}{D} \right) \left(
  \frac{P_r(k,m,s) P_w(k,m,s)|{\bf S}(k,m)|}{Z(k,m)} \right) } \label{eqn:final}
\end{align}

Fig.~\ref{fig:waddle-example} shows an example of how $4$-node motif
statistics are estimated using a combination of a random walk and a
randomized waddle. 
\begin{figure}[!t]
\begin{center}
\includegraphics[width=2.5in]{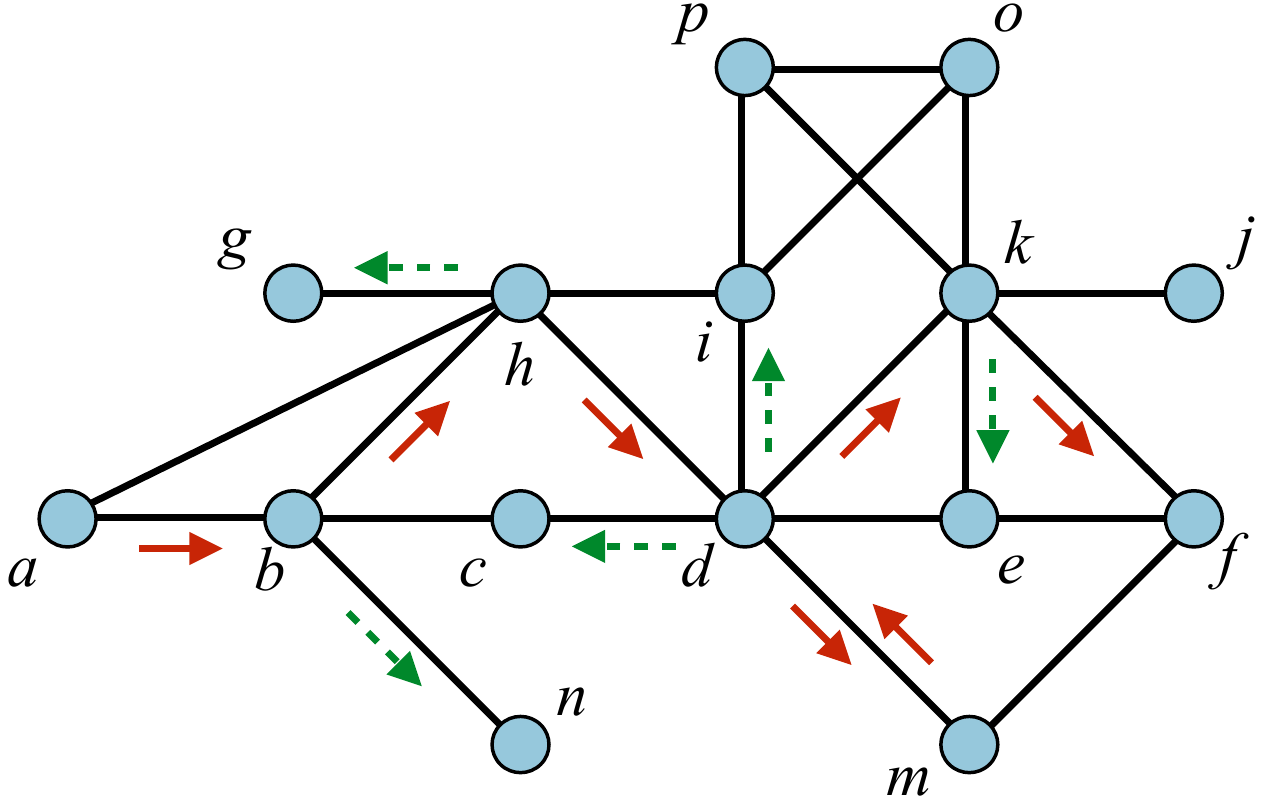}
\includegraphics[width=3.6in]{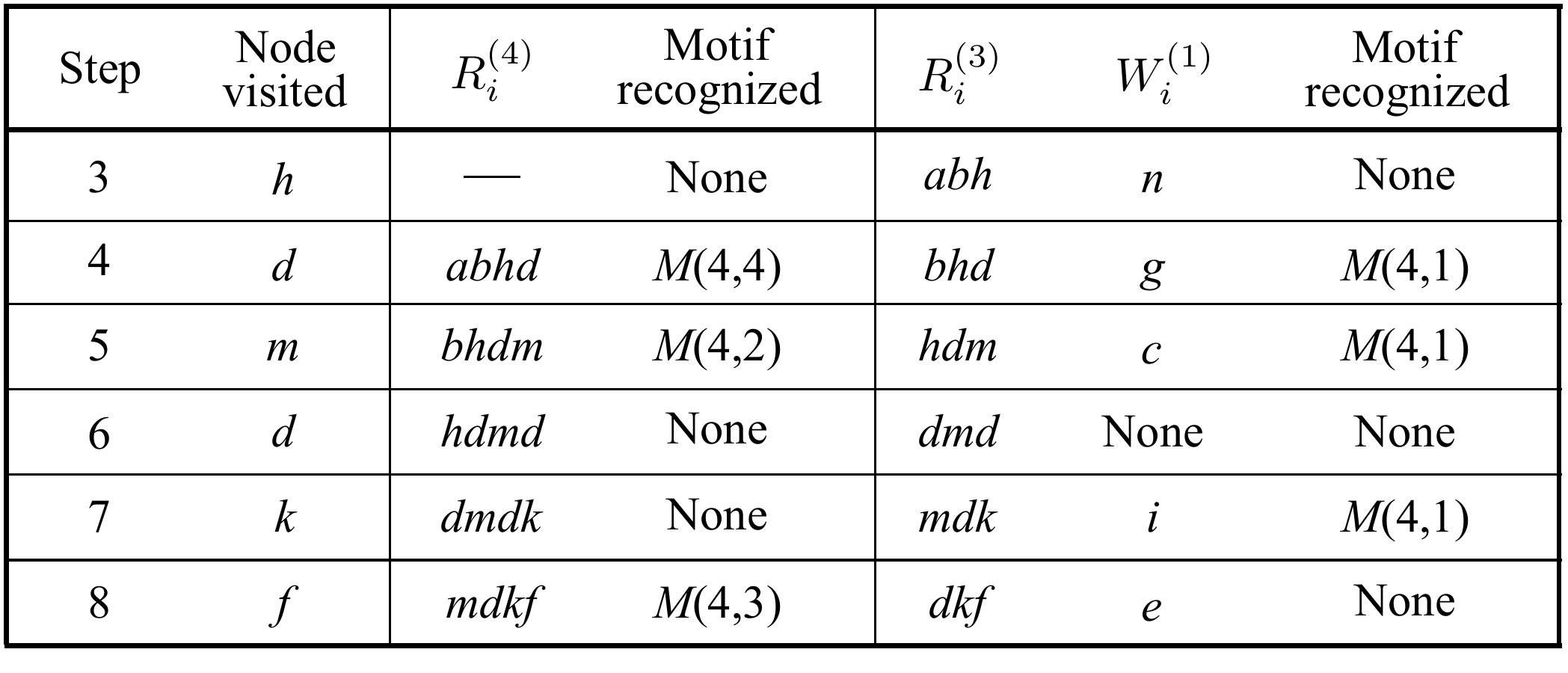} 
\caption{An example to illustrate the collection of $4$-node motif
  statistics in Waddling Random Walk. The random walk is shown by red
  arrows and the waddles are shown by dotted-line green arrows.}
\label{fig:waddle-example}
\end{center}
\end{figure}  

Lines 16--17 of the pseudocode use Eqn.~(\ref{eqn:final}) to
compute the sum $|{\bf S}(k,m)|$ by computing the LHS of the above
equation and using known values of $Z(k,m)$, $P_r(k,m,s)$ and $P_w(k,m,s)$ for
each motif. Lines 24--25 finally compute and return the motif concentration.

\subsection{Example: 4 and 5-node motif statistics}

While Algorithm~\ref{alg:motifalg} shows the pseudocode for the
Waddling Random Walk in the general case for $k$-node motifs, it is
illustrative to show how the waddle works in the case of $4$ and
$5$-node motifs. Algorithm~\ref{alg:four-node-alg} and
Algorithm~\ref{alg:five-node-alg} replace the lines 3--23 in
Algorithm~\ref{alg:motifalg} for the $4$-node case and $5$-node case,
respectively. 

\begin{algorithm}[t]
\caption{Snippet of WRW for 4-node motifs}\label{alg:four-node-alg}
\begin{algorithmic}[1]
\While{$i\leq n$}
\State{$r_i \leftarrow$ Random node in $N(r_{i-1})$}
\If{Nodes in $R_i^{(4)}$ are distinct}
\State{$m\leftarrow$ id of the motif induced by $R_i^{(4)}$}
\State{$c_m \leftarrow c_m + {\displaystyle \frac{ d(r_{i-1}) d(r_{i-2}) }{P_r(4,m,4)}}$}
\EndIf
\If{Nodes in $R_i^{(3)}$ are distinct}
\State{$w \leftarrow$ Random node in $N(r_{i-1})$}
\If{$H(R_i^{(3)} \cup w )$ is isomorphic to $M(4,1)$}
\State{$c_1 \leftarrow c_1 + {\displaystyle \frac{ d(r_{i-1})^2 }{6}}$}
\EndIf
\EndIf
\State{$i\leftarrow i+1$}
\EndWhile
\end{algorithmic}
\end{algorithm}

Lines 5 and 10 in the 4-node case are surprisingly simple compared to the
generalized case represented in line 17 of
Algorithm~\ref{alg:motifalg}. This is because $\phi(W_i^{(k-s)} |
R_i^{(s)})$ is simply $1/d(r_{i-1})$. The motif type $M(4,1)$ is the
only case in which $s=3$, for which $P_r(4,1,3) = 6$, $P_w(4,1,3) = 1$
and $Z(k,m)=1$.

In the 5-node case, only motif types $M(5,2)$, $M(5,3)$ and $M(5,6)$
have the number of vertices in their longest simple path less than
5. The corresponding values of $P_r$, $P_w$ and $Z$ are presented in
Fig.~\ref{fig:P-example}.  

\begin{algorithm}[h]
\caption{Snippet of WRW for 5-node motifs}\label{alg:five-node-alg}
\begin{algorithmic}[1]
\While{$i\leq n$}
\State{$r_i \leftarrow$ Random node in $N(r_{i-1})$}
\If{Nodes in $R_i^{(5)}$ are distinct}
\State{$m\leftarrow$ id of the motif induced by $R_i^{(5)}$}
\State{$c_m \leftarrow c_m + {\displaystyle \frac{ d(r_{i-1}) d(r_{i-2}) d(r_{i-3})}{P_r(5,m,5)}}$}
\EndIf
\If{Nodes in $R_i^{(4)}$ are distinct}
\State{$w \leftarrow$ Random node in $N(r_{i-2})$}
\If{$H(R_i^{(4)} \cup w )$ is isomorphic to $M(5,2)$}
\State{$c_2 \leftarrow c_2 + {\displaystyle \frac{ d(r_{i-1})d(r_{i-2})^2 }{2}}$}
\ElsIf{$H(R_i^{(4)} \cup\,w )$ is isomorphic to $M(5,6)$}
\State{$c_6 \leftarrow c_6 + {\displaystyle \frac{ d(r_{i-1})d(r_{i-2})^2 }{4}}$}
\EndIf
\EndIf
\If{Nodes in $R_i^{(3)}$ are distinct}
\State{$w_1 \leftarrow$ Random node in $N(r_{i-1})$}
\State{$w_2 \leftarrow$ Random node in $N(r_{i-1})$}
\State{$R_{\mathrm{temp}} \leftarrow R_i^{(3)} \cup \{w_1,w_2\}$}
\If{$H(R_{\mathrm{temp}}  )$ is isomorphic to $M(5,3)$}
\State{$c_3 \leftarrow c_3 + {\displaystyle \frac{ d(r_{i-1})^3 }{24}}$}
\EndIf
\EndIf
\State{$i\leftarrow i+1$}
\EndWhile
\end{algorithmic}
\end{algorithm}

\newtheorem{theorem}{Theorem}
\newtheorem{lemma}{Lemma}

\subsection{Theoretical bound}
We present a theoretical analysis of the number of steps required in the random walk in order to get an accurate estimate.
Let $T = T(\epsilon)$ be the $\epsilon$-mixing time of a Markov Chain on $G$ where $\epsilon$ is at most $\frac{1}{8}$. Let $Q$ be the product of the top $k$ degrees in $G$.

 \begin{lemma}
 \label{lem:bound1}
For $0< \delta<1$, there exists a constant $\xi$, such that for $t \geq \xi \frac{TDQ}{|{\bf{S}}(k,m)| \delta^2} \log{\frac{1}{\alpha}}$, we have
 \begin{eqnarray*}
\mathrm{Pr} \left[\left(1-\frac{2\delta}{1+\delta}\right)C(k,m)\leq \hat{C}(k,m)\leq \left(1+ \frac{2\delta}{1-\delta}\right)C(k,m)\right] > 1-2\alpha
\end{eqnarray*}
\end{lemma}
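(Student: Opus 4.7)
The plan is to view $\hat{C}(k,m)$ as a ratio of two running averages along the trajectory of the random walk, show that each average concentrates around its stationary expectation via a Markov-chain Chernoff bound, and then combine the two concentration statements through a union bound and a small algebraic manipulation of the ratio.

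First, I would isolate the numerator and denominator. For each step $i$ after the mixing time, let
\[
X_i^{(m)} \;=\; \omega(R_i^{(s)}\cup W_i^{(k-s)},k,m)\,
\frac{Z(k,m)\,\prod_{j=1}^{s-2} d(r_{i-j})}{\phi(W_i^{(k-s)}\mid R_i^{(s)})\,P_r(k,m,s)\,P_w(k,m,s)},
\]
so that $c_m = \sum_i X_i^{(m)}$ and $c_t = \sum_m c_m$. By Eqn.~(\ref{eqn:final}), the stationary expectation satisfies $\mathbb{E}_\pi[X_i^{(m)}] = |{\bf S}(k,m)|/D$ and therefore $\mathbb{E}_\pi\bigl[\sum_m X_i^{(m)}\bigr] = |{\bf S}(k)|/D$. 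The ratio $c_m/c_t$ is the natural plug-in estimator of $|{\bf S}(k,m)|/|{\bf S}(k)| = C(k,m)$.

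Next I would bound the range of each summand. Since $\phi(W_i^{(k-s)}\mid R_i^{(s)})$ is a product of reciprocals of degrees of at most $k-s$ of the visited nodes, the quantity $\prod d(r_{i-j})/\phi(W\mid R)$ is a product of at most $k$ node degrees, which is in turn bounded above by $Q$, the product of the top-$k$ degrees in $G$. Dividing by the fixed combinatorial constants $P_r, P_w, Z^{-1}$ only shrinks this bound. Hence $0 \le X_i^{(m)} \le Q$ (up to a motif-dependent constant absorbed into $\xi$).

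The core technical step is applying a Chernoff-type concentration inequality for functions of a Markov chain that is mixing in time $T$; the standard results of Gillman (or Lezaud/Kahale/Chung--Lu) give, for a bounded functional $f$ on a reversible chain,
\[
\Pr\Bigl[\,\bigl|\tfrac{1}{t}\!\sum_i f(R_i) - \mathbb{E}_\pi[f]\bigr| > \delta\,\mathbb{E}_\pi[f]\Bigr]
\;\le\; 2\exp\!\Bigl(-\,c\,\frac{\delta^{2}\,\mathbb{E}_\pi[f]\,t}{T\,\|f\|_\infty}\Bigr).
\]
Plugging in $\|f\|_\infty \le Q$ and $\mathbb{E}_\pi[f] = |{\bf S}(k,m)|/D$ and inverting to solve for $t$ yields exactly the threshold $t \geq \xi\,\tfrac{TDQ}{|{\bf S}(k,m)|\,\delta^{2}}\log\tfrac{1}{\alpha}$ stated in the lemma, both for the numerator $c_m/t$ and, with the analogous (and no worse) expectation and range, for the denominator $c_t/t$.

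Finally, I would take a union bound over the two concentration events, so that with probability at least $1-2\alpha$ both $\hat{A}=c_m/t \in [(1-\delta)A,(1+\delta)A]$ and $\hat{B}=c_t/t \in [(1-\delta)B,(1+\delta)B]$, where $A/B = C(k,m)$. The elementary identities
\[
\frac{1-\delta}{1+\delta} \;=\; 1 - \frac{2\delta}{1+\delta}, \qquad
\frac{1+\delta}{1-\delta} \;=\; 1 + \frac{2\delta}{1-\delta}
\]
then convert the componentwise bounds into the asymmetric two-sided bound on $\hat{C}(k,m)=\hat{A}/\hat{B}$ exactly as claimed. The main obstacle is the third step: invoking the right Markov-chain Chernoff inequality with the correct normalization so that the constants collapse into a single $\xi$, and verifying that using $Q$ as a uniform range bound is tight enough not to introduce extra factors of $k$ or $D$; the rest is bookkeeping on expectations from Eqn.~(\ref{eqn:final}) and a routine union-bound/ratio argument.
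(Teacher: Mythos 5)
Your proposal matches the paper's proof essentially step for step: you bound the per-step estimator by (a constant times) $Q$, invoke a Markov-chain Chernoff bound (the paper uses the Chung--Lu form that you mention, after dividing by $Q$ so the functional lies in $[0,1]$), establish concentration of $c_m/t$ around $|{\bf S}(k,m)|/D$ and of $c_t/t$ around $\sum_m|{\bf S}(k,m)|/D$, take a union bound, and finish with the $\tfrac{1-\delta}{1+\delta}=1-\tfrac{2\delta}{1+\delta}$, $\tfrac{1+\delta}{1-\delta}=1+\tfrac{2\delta}{1-\delta}$ ratio algebra. The only cosmetic difference is that the paper explicitly normalizes $f_i$ by $Q$ to fit the $[0,1]$-valued hypothesis of its cited theorem, whereas you state the bound with $\|f\|_\infty$ in the exponent directly --- an equivalent rescaling.
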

where $\hat{C}(k,m)$ is the estimated $C(k,m)$, the concentration of motif $M(k,m)$.
The detailed proof is presented in the appendix.

When the number of steps in the random walk $t \geq \xi \frac{TDQ}{|{\bf{S}}(k,m)| \delta^2} \log{\frac{1}{\alpha}}$, the relative error of an estimate of the concentration of motif $M(k,m)$ is at most $\frac{2\delta}{1-\delta}$ with probability greater than $1-2\alpha$ . Besides $\alpha$ and $ \delta$, the number of steps is also determined by the mixing time, the degree distribution of the graph and the number of subgraphs which are isomorphic to motif $M(k,m)$. 

\section{Performance Analysis}
\label{sec:results} 

\begin{figure*}[!t]
\centering
    \subfigure[{com-Amazon ($Q=8K$)}]{
       \includegraphics[width=2.5in]{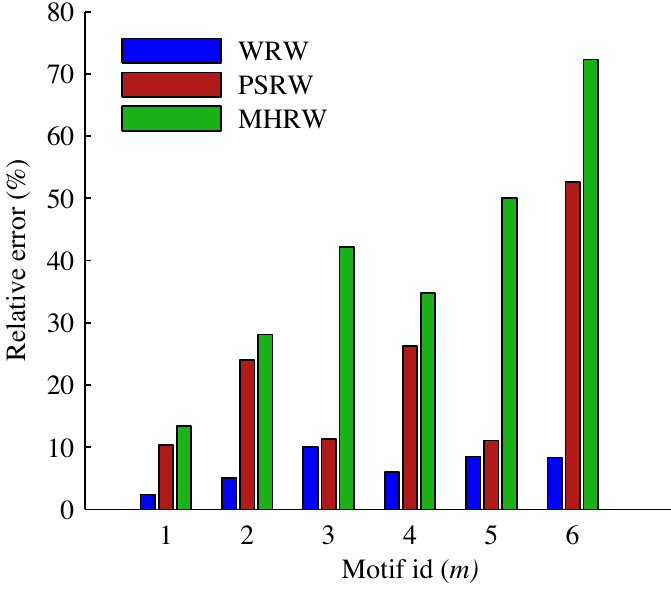}
       \label{fig:accuracy2}
       }
       \subfigure[{soc-Slashdot ($Q=8K$)}]{ 
       \includegraphics[width=2.5in]{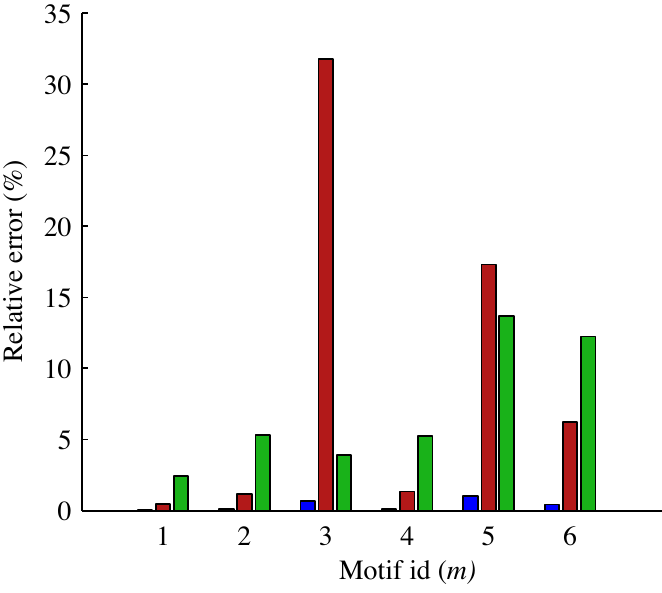}
       \label{fig:accuracy3}
       }
           \subfigure[{socfb-Penn94 ($Q=8K$)}]{
       \includegraphics[width=2.5in]{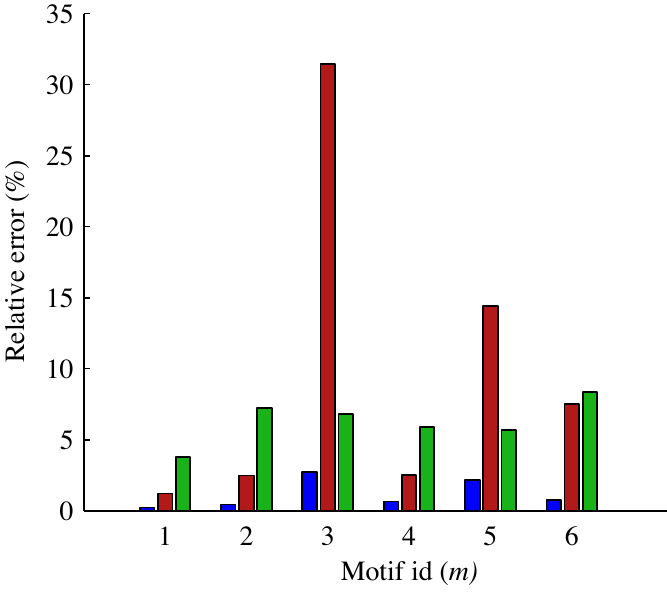}
       \label{fig:accuracy5}
       }
       \subfigure[{com-Youtube ($Q=20K$)}]{ 
       \includegraphics[width=2.5in]{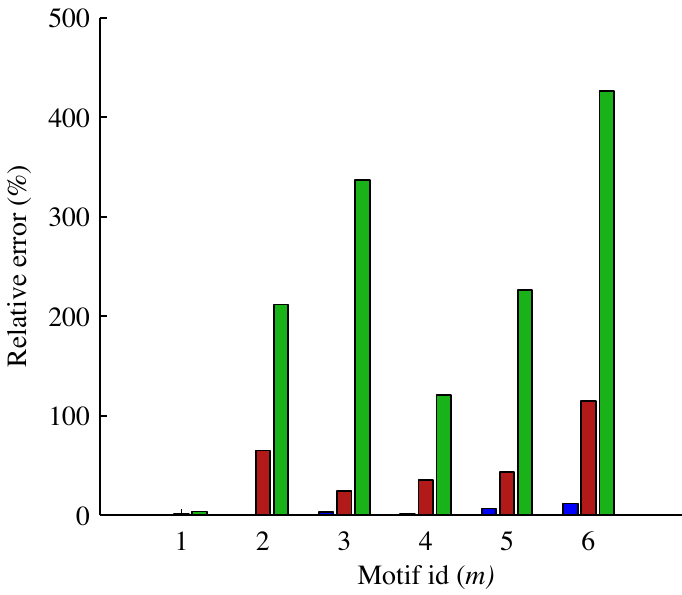}
       \label{fig:accuracy6}
       }       
   \caption{Comparison of the relative errors in the estimates of
     $4$-node motif concentrations.}\label{fig:accuracysum_four} 
\end{figure*}

\begin{figure*}[!t]
\centering

                  \subfigure[{soc-Slashdot($Q=18K$)}]{
       \includegraphics[width=4.5in]{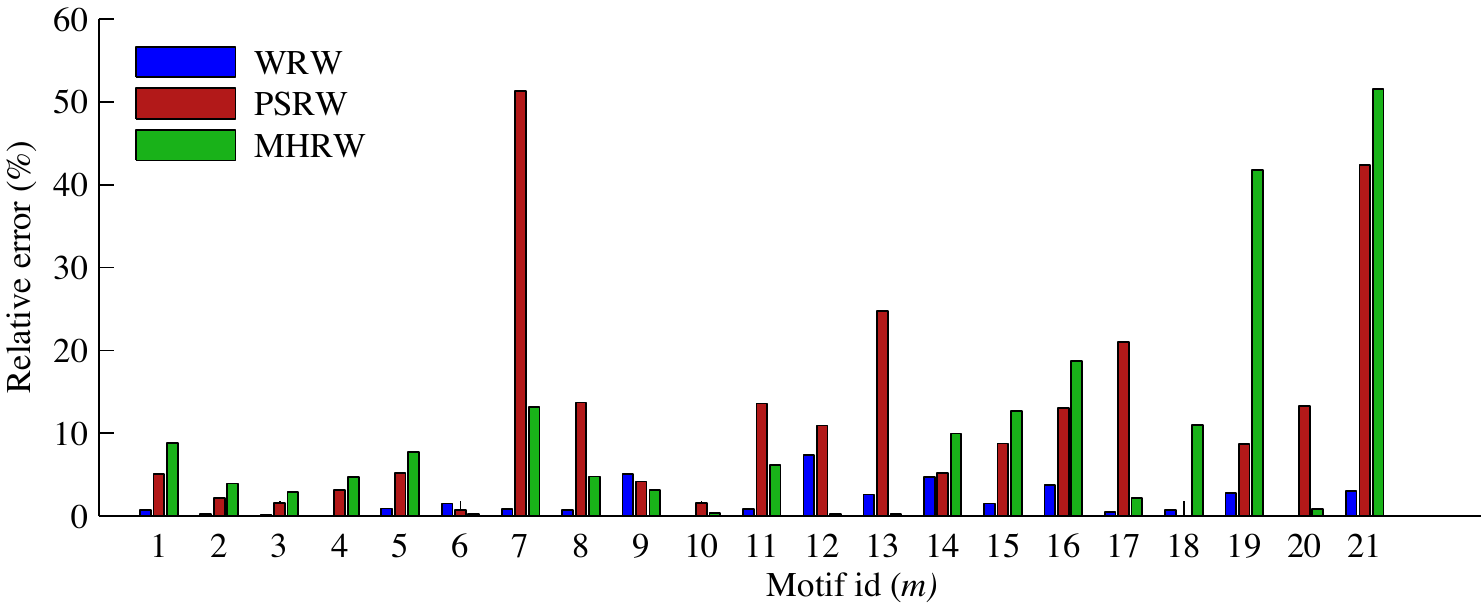}
       \label{fig:accuracy7}
       }
       \subfigure[{socfb-Penn94($Q=18K$)}]{ 
       \includegraphics[width=4.5in]{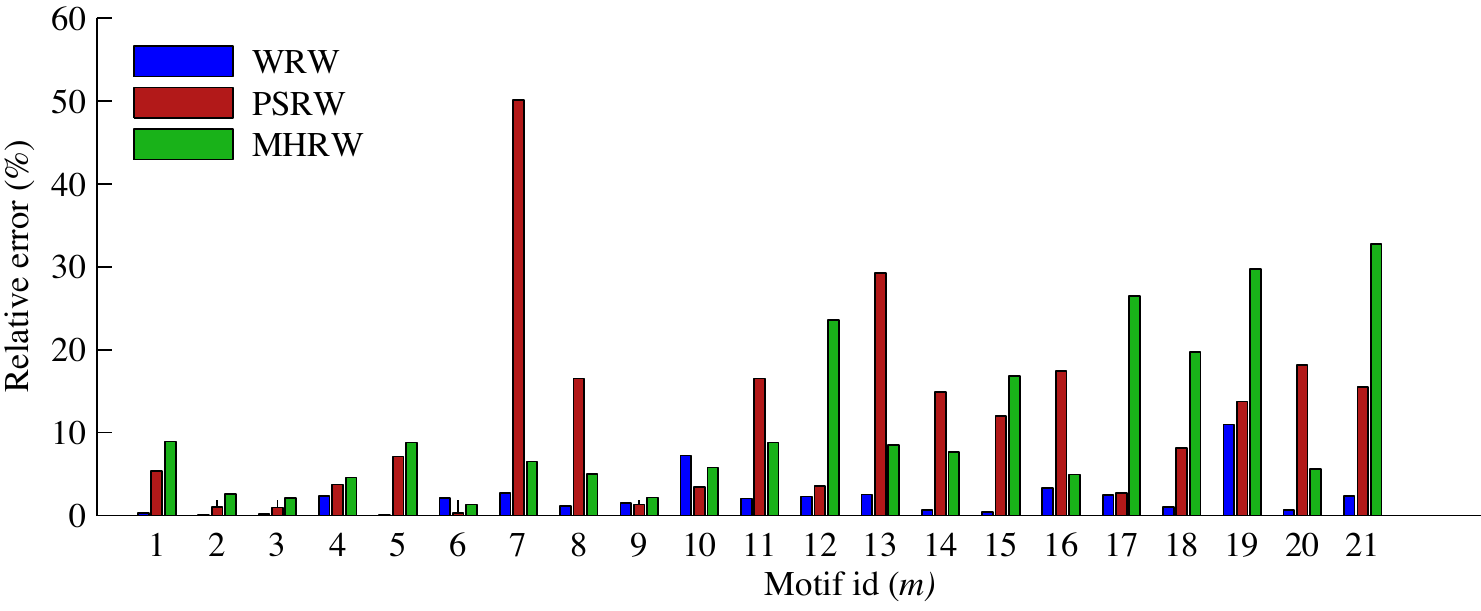}
       \label{fig:accuracy8}
       }
          \caption{Comparison of the relative errors in the estimates of
     $5$-node motif concentrations.}\label{fig:accuracysum_five} 
\end{figure*} 

In this section, we conduct a comparative performance analysis of
Waddling Random Walk (WRW) against the best two, both recently
proposed, graph sampling algorithms which address the same problem
under the same constraints on access to the full graph. The
Metropolis-Hastings Random Walk (MHRW) estimates the concentrations of
$k$-node motifs by adopting the 
Metropolis-Hastings method to perform a uniform random sampling of
connected induced subgraphs (CIS) with $k$ nodes in the large graph
\cite{SahHas2015}. The Pairwise Subgraph Random Walk (PSRW), samples a
set of CISs with $k-1$ nodes by walking on the graph of these CISs to
estimate the $k$-node motif statistics \cite{WanLui2014}. Both PSRW
and MHRW, like WRW, are capable of estimating motif concentrations of any
size. As presented in \cite{SahHas2015,WanLui2014}, these two algorithms are significantly better than previously known methods in terms of both accuracy and speed, which motivates our choice of these algorithms for the comparative analysis in this section.

\begin{table}[!b]
\centering
\caption{Graph datasets used in the analysis.\label{tab:graphdata}}{
\newcommand{\tabincell}[2]{\begin{tabular}{@{}#1@{}}#2\end{tabular}}
\centering
\begin{tabular}{c|c|c|c|c|c|cl|} 
 \Xhline{1pt}
\tabincell{c}{Graph\\(LCC)}& \tabincell{c}{Nodes\\$|V|$} &\tabincell{c}{Edges\\$|E|$}&$C(4,1)$&$C(4,6)$&$C(5,3)$&$C(5,21)$\\  \Xhline{1pt}
com-Amazon & 3.35e+05&9.26e+05& 6.99e-01&1.55e-03& 7.45e-01& 7.24e-06\\ \hline
soc-Slashdot & 7.73e+04 & 4.69e+05&6.86e-01&9.19e-05&6.15e-01 & 1.15e-06\\ \hline
socfb-Penn94 &4.15e+04 &1.36e+06 &6.52e-01  &3.59e-04&6.18e-01 & 2.30e-06\\ \hline
com-Youtube&1.13e+06&2.99e+06 &9.82e-01&   8.55e-07& ---  & ---\\
\hline\end{tabular}}
\end{table}

\begin{table}[!b]
\centering
\caption{Runtime for exact computation of motif concentrations (in seconds).\label{tab:runtime_compare}}{
\centering
\begin{tabular}{c|c|c} 
 \Xhline{1pt}
   &4-node motif&5-node motif   \\ 
Graph  &Exact computation &Exact computation \\  \Xhline{1pt}
com-Amazon & 2.14&76.98\\ \hline
soc-Slashdot & 9.21&7030.17\\ \hline
socfb-Penn94 &47.58 &178845.84  \\ \hline
com-Youtube&102.62&---\\
\hline\end{tabular}}
\end{table}



\begin{figure*}[!t]
\centering
      \subfigure[{com-Amazon: $C(4,1)/$Actual}]{ 
       \includegraphics[width=1.7in]{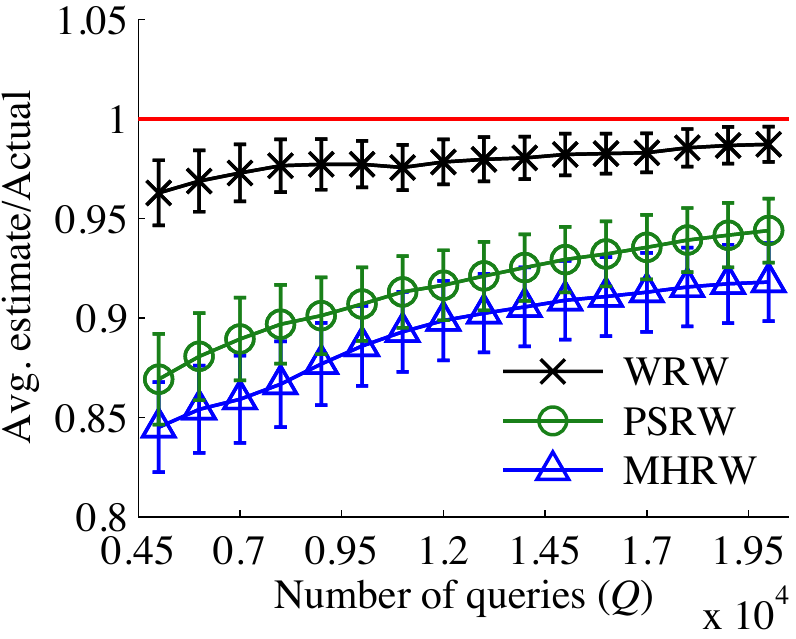}
       \label{fig:erbar_amazon_star}
       }
    \subfigure[{com-Amazon: $C(4,6)/$Actual }]{
       \includegraphics[width=1.7in]{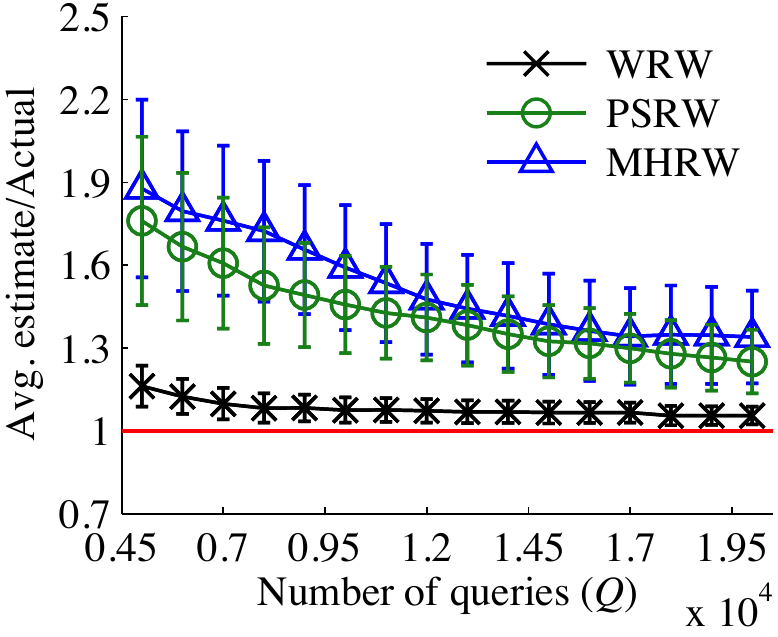}
       \label{fig:erbar_amazon_clique}
       }
       \subfigure[{soc-Slashdot: $C(4,1)/$Actual}]{ 
       \includegraphics[width=1.7in]{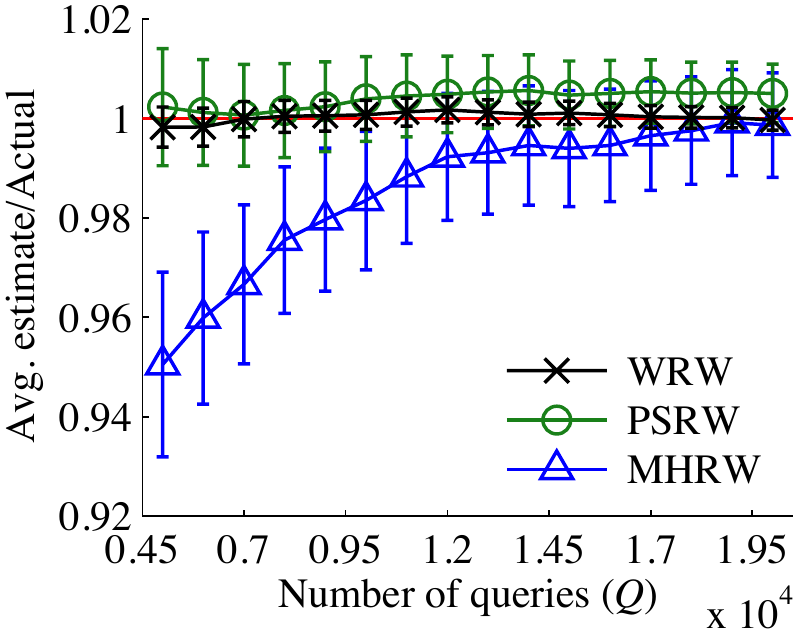}
       \label{fig:erbar_slashdot_star}
       }  
       \subfigure[{soc-Slashdot: $C(4,6)/$Actual}]{ 
       \includegraphics[width=1.7in]{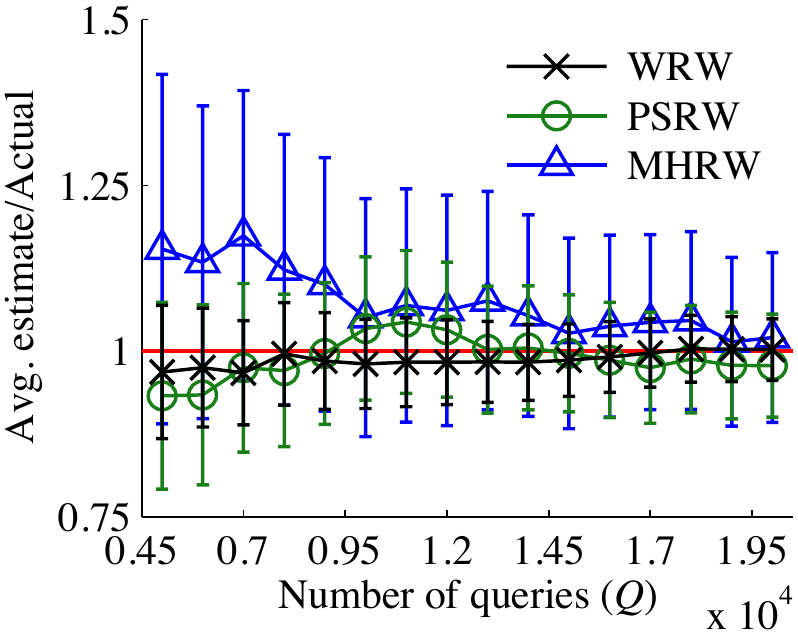}
       \label{fig:erbar_slashdot_clique}
       }
              \subfigure[{socfb-Penn94: $C(4,1)/$Actual}]{ 
       \includegraphics[width=1.7in]{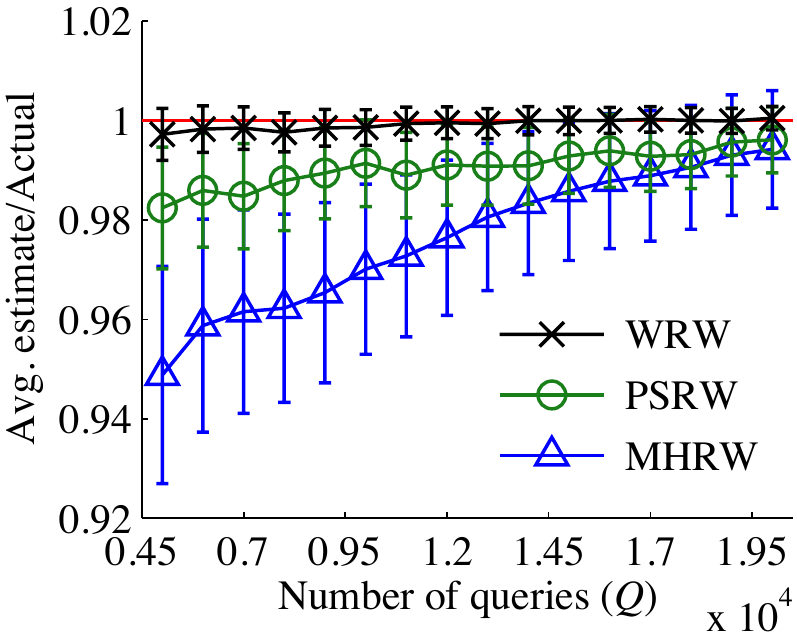}
       \label{fig:erbar_penn_star}
       }
              \subfigure[{socfb-Penn94: $C(4,6)/$Actual}]{ 
       \includegraphics[width=1.7in]{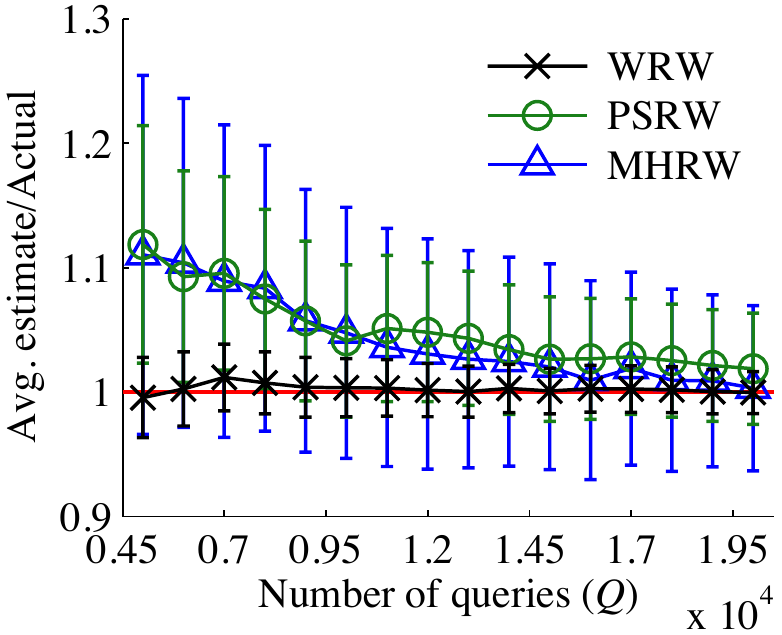}
       \label{fig:erbar_penn_clique}
       }
           \subfigure[{com-Youtube: $C(4,1)/$Actual}]{ 
       \includegraphics[width=1.7in]{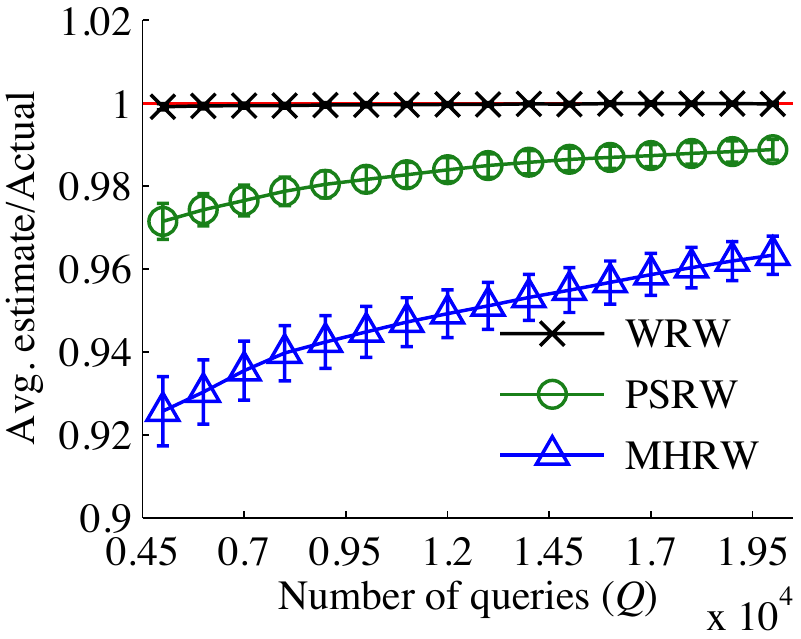}
       \label{fig:erbar_youtube_star}
       }
       \subfigure[{com-Youtube: $C(4,6)/$Actual}]{ 
       \includegraphics[width=1.7in]{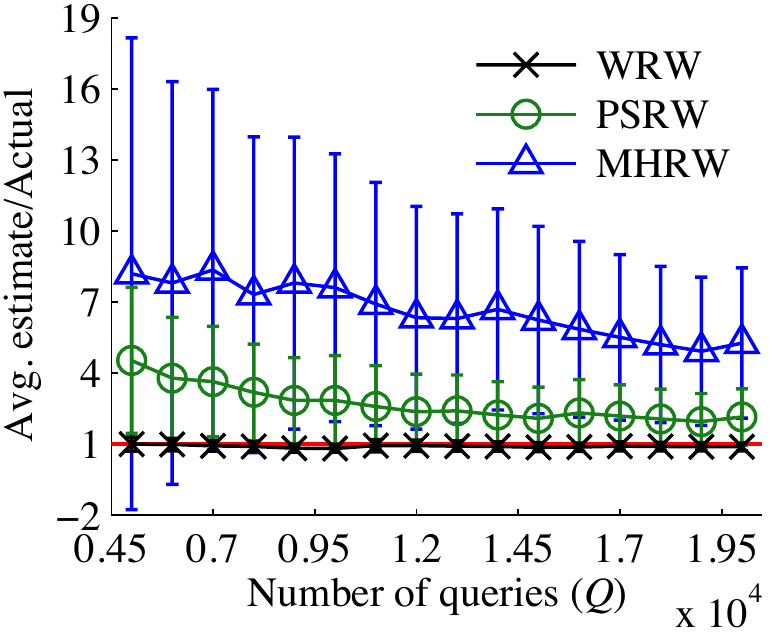}
       \label{fig:erbar_youtube_clique}
       }
       
   \caption{Comparison of the ratio of the average estimated values of
     $C(4,1)$ and $C(4,6)$ and their actual values. Red line indicates
     1. The error bars indicate 95\% confidence intervals over 200
     independent runs. }\label{fig:accuracysum2} 
\end{figure*}

\begin{figure*}[!t]
\centering
 \subfigure[{soc-Slashdot: $C(5,3)/$Actual }]{
       \includegraphics[width=2.2in]{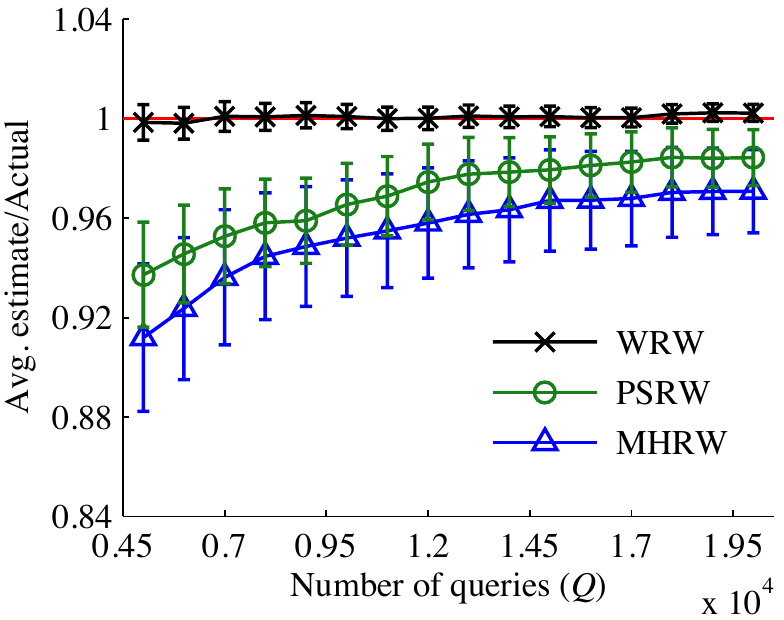}
       \label{fig:erbar_slashdot_c5_3}
       }\hskip0.05in
 \subfigure[{soc-Slashdot: $C(5,21)/$Actual}]{ 
       \includegraphics[width=2.2in]{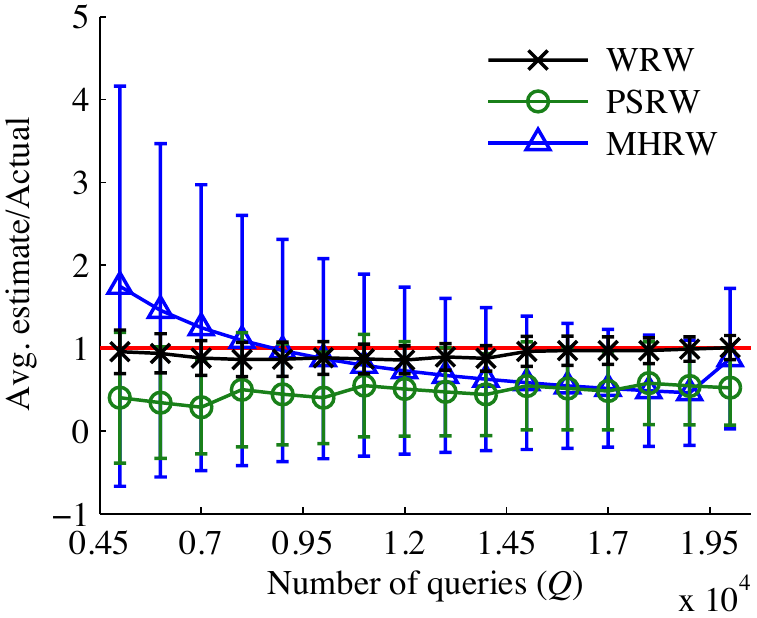}
       \label{fig:erbar_slashdot_c5_21}
       }\hskip0.05in
\subfigure[{socfb-Penn94: $C(5,3)/$Actual }]{
       \includegraphics[width=2.2in]{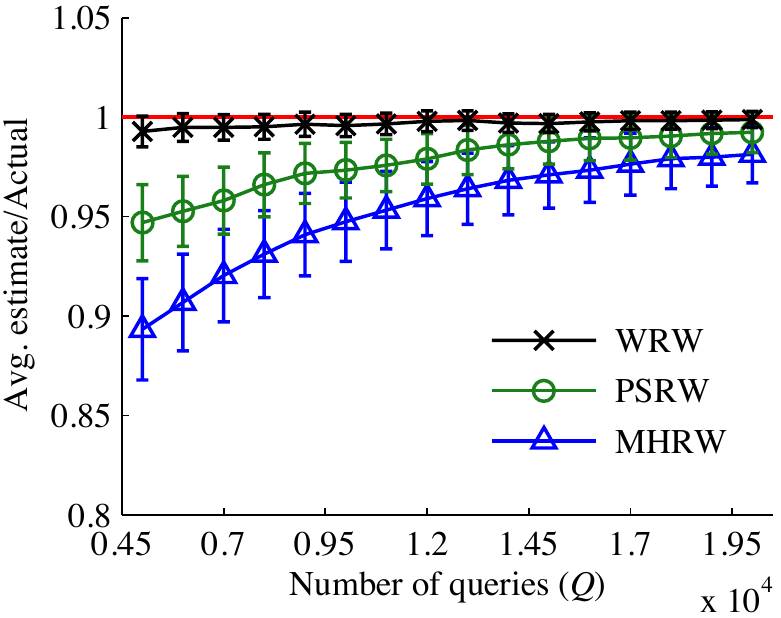}
       \label{fig:erbar_penn_c5_3}
       }\hskip0.05in
       \subfigure[{socfb-Penn94: $C(5,21)/$Actual}]{ 
       \includegraphics[width=2.2in]{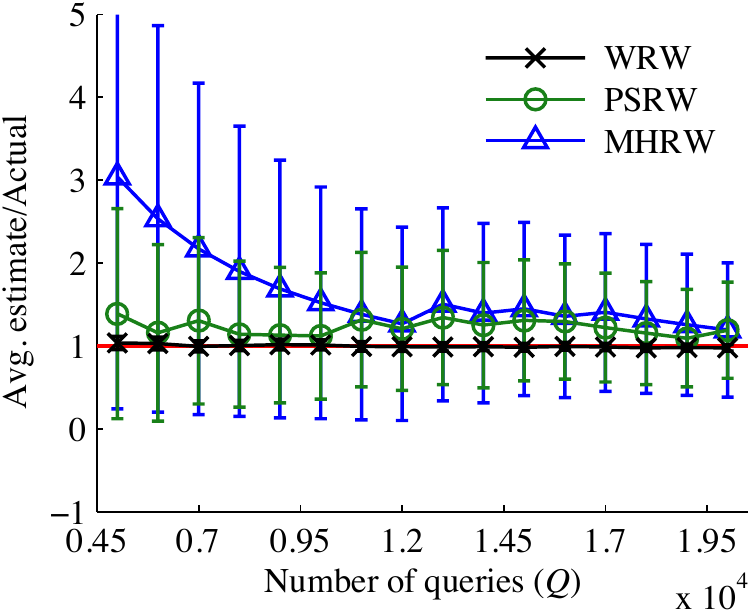}
       \label{fig:erbar_penn_c5_21}
       }
   \caption{Comparison of the ratio of the average estimated values of
     $C(5,2)$ and $C(5,3)$ and their actual values. Red line indicates
     1. The error bars indicate 95\% confidence intervals over 200
     independent runs. }\label{fig:accuracysum3} 
\end{figure*}
 
We performed our experiments on real graphs from the
Stanford Network Analysis Project (SNAP) \cite{snapnets}
 and the Koblenz Network Collection \cite{konect}. For
each graph dataset used, we run the random walk algorithms on the
largest connected component (LCC) of it. The 
name, the number of vertices and the number of edges in these graphs are
listed in Table \ref{tab:graphdata} along with the actual values 
of selected $4$-node and $5$-node motif concentrations.
Exact values of the motif concentrations are obtained by using Orbit Counting Algorithm (Orca) \cite{HocDem2014}. 
It reduces the time complexity of existing direct enumeration methods
by an order of magnitude. One objective of this work is to make large
graph analysis possible, through sampling algorithms, on ordinary
desktops and, therefore, all of the performance analysis in this paper
was conducted on an iMac with 8GB 1600MHz DDR3 memory and a 2.7GHz
Intel Core i5 processor. The runtime of using Orca to calculate
4-node and 5-node motif concentrations on the iMac is reported in
Table \ref{tab:runtime_compare}. As shown in the table, the runtime
for calculating motif concentrations increases dramatically as the
graph size increases. For example, for socfb-Penn94 graph, it takes
more than 2 days to obtain the exact concentrations of 5-node
motifs. For com-Youtube graph, the largest graph among all four
graphs, we estimate more than one month of runtime to complete the
exact computation of 5-node motif concentrations --- the data of 5-node motifs
in com-Youtube graph, therefore, is not provided here.

We conduct our comparative analysis based on four key metrics: the
number of queries, the run time, the accuracy (how close is the
estimate to the correct answer?), and the precision (how low is the
variance in the estimates?). The first two address the speed of the
algorithms and the latter two address the confidence we should have in
the estimates. If the amount of processing done per query by
the algorithms are different, the number of queries is not directly
indicative of the speed of the algorithm --- therefore, we also use the actual
run time of the algorithms in our analysis. The number of
queries, however, is still meaningful as a metric since it indicates
the amount of information collected by the algorithm.

\subsection{Accuracy and precision}
\label{subsec:acc}
Figs.~\ref{fig:accuracysum_four} and \ref{fig:accuracysum_five} show relative errors in estimating the
concentrations of each of the $4$-node and 5-node motifs for each of the three
algorithms. We measure the relative error as:
\[
\mathrm{Relative~error} = \frac{\mathrm{Average~estimate} -
  \mathrm{Actual~value}}{\mathrm{Actual~value}}
\]
The average estimate is calculated as the mean of the estimated value
over 200 independent runs. 

For each graph, we fixed $Q$, the number of queries.
For PSRW and MHRW, since the CIS sampled in the next step differs from the CIS of
the current step in only one node, we assume only one query per CIS
considered. As shown in Figs.~\ref{fig:accuracysum_four} and  \ref{fig:accuracysum_five},
for almost all motif types, the WRW algorithm proposed in this paper yields a smaller
relative error, i.e., higher accuracy, for the same number of queries
than the other two algorithms, and especially so when the actual concentration of the motif is low. 


Figs.~\ref{fig:accuracysum2} and \ref{fig:accuracysum3} help evaluate both the accuracy and the
precision of the WRW algorithm in comparison to PSRW and MHRW in
4-node case and 5-node case. They show the ratio of the average
estimated motif concentration to the actual value for each of the four
graphs with increasing number of queries. 

In large graphs, the star motifs, e.g., $M(4,1)$ and $M(5,3)$,
typically have the largest concentration. The clique motifs, such
as $M(4,6)$ and $M(5,21)$, have the smallest concentration in most
cases. In order to demonstrate accuracy and precision spanning
the full range of actual motif concentrations, we choose to plot the
estimates for $C(4,1)$ and $C(4,6)$ in case of 4-node motifs (in
Fig.~\ref{fig:accuracysum2}) and $C(5,3)$ and $C(5,21)$ in case of
5-node motifs (in Fig.~\ref{fig:accuracysum3}). 
As shown in Fig.~\ref{fig:accuracysum2} and \ref{fig:accuracysum3}, to obtain similar accuracy and precision, fewer nodes are queried for the star motifs  than the clique motifs. When the 
concentration is high, it is easier to reach good accuracy and
precision since the random walk will encounter more samples. But, note
that, as also shown in Figs.~\ref{fig:accuracysum_four} and
\ref{fig:accuracysum_five}, the WRW algorithm achieves especially good
accuracy and precision when doing so is harder, i.e., when the
motif concentration is very low.

The closeness of the WRW plot to the red line indicates its
significantly better accuracy than the other algorithms. Also, the smaller error bars on
the WRW plot show that, besides being more accurate, the estimates
made by WRW are also more precise compared to the other algorithms.






\subsection{Runtime}

The WRW algorithm achieves an improvement in the runtime by avoiding
the enumeration of subgraphs but instead simply picking a random set of
nodes and checking for isomorphism. This makes a particular difference
in the case of graphs with high average node-degree.

We implemented the three algorithms, WRW, PSRW and MHRW, in Python using
{\em iGraph} routines. Since part of the point of graph sampling is to make
Big Data analysis feasible on ordinary desktops, we ran all of the
simulations on the same ordinary iMac as that used for
Table~\ref{tab:runtime_compare}. We assume that the graph datasets are stored on the
local machine and, so, the time lost to querying corresponds to the
time involved in accessing the memory.

Fig.~\ref{fig:runtime} plots the relative error of the estimates of
$C(4,1)$ and $C(5,3)$ made by the three algorithms against the total runtime,
averaged over 100 independent runs on com-Youtube and
socfb-Penn94 graphs. The figure shows that WRW achieves significantly
better accuracy for the same runtime than other algorithms in both 4-node and 5-node motif cases. 

\begin{figure}[!t]
\centering
\subfigure[{com-Youtube: $C(4,1)$}]{
       \includegraphics[width=2in]{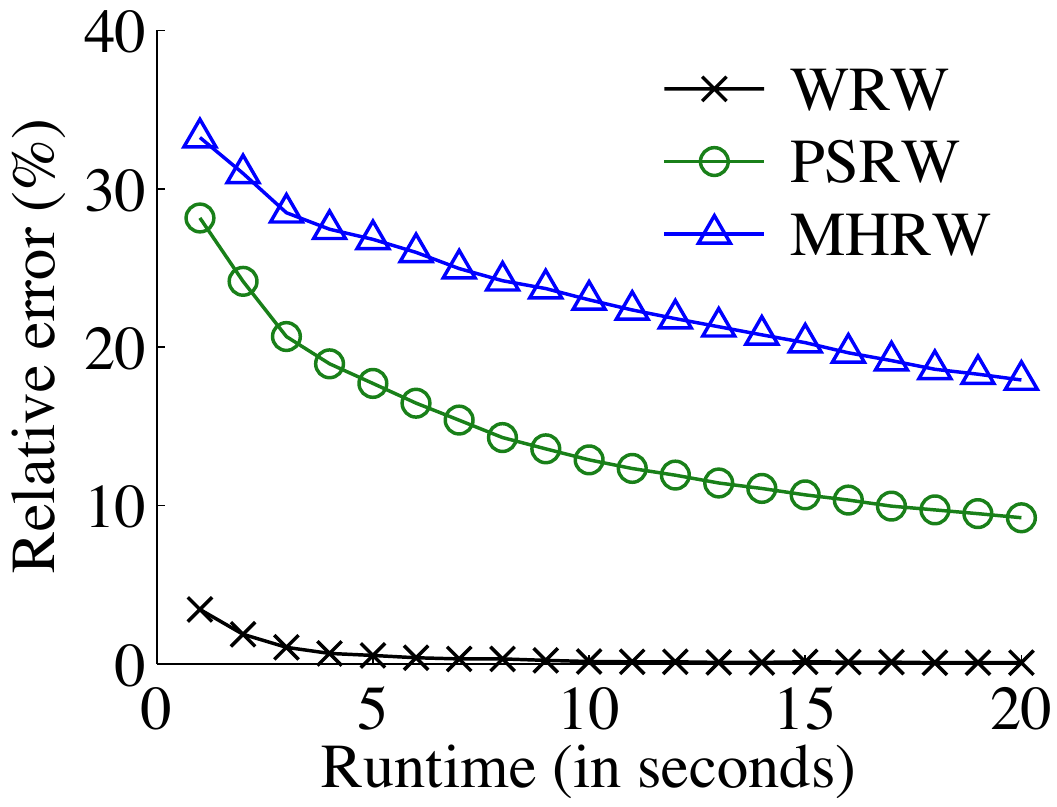}
       \label{fig:runtime2}
       }
       \subfigure[{socfb-Penn94: $C(5,3)$}]{
       \includegraphics[width=2in]{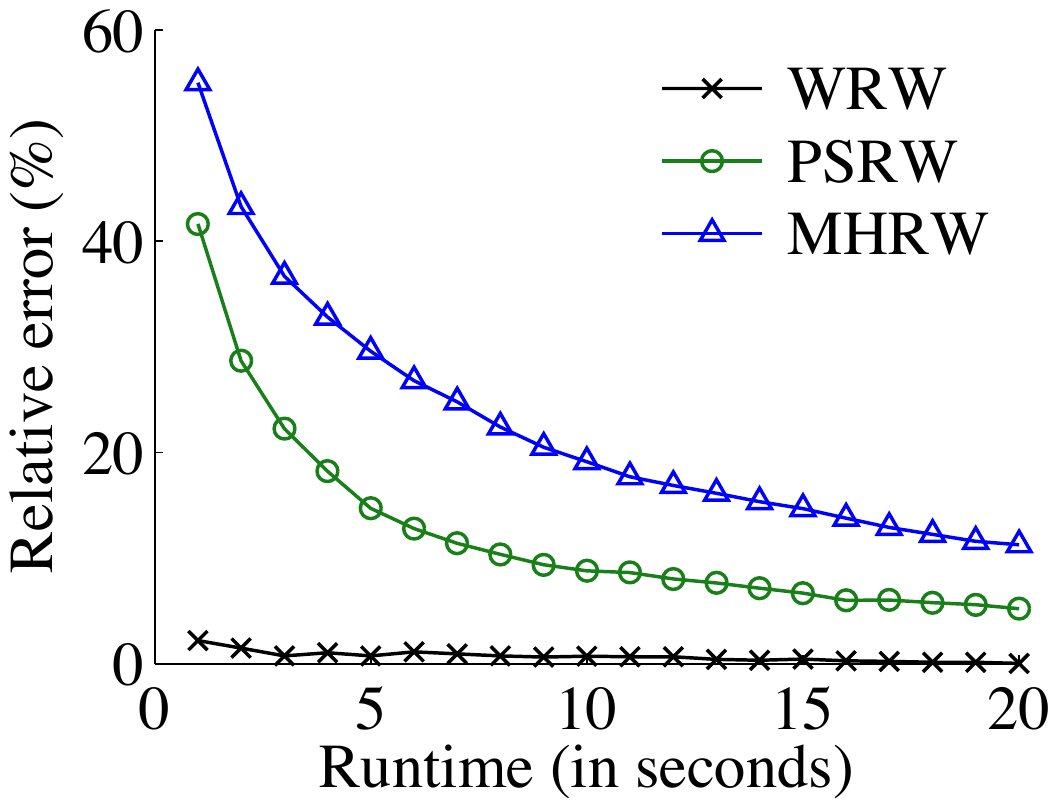}
       \label{fig:runtime3}
       }

\caption{Comparisons of the relative error in estimating $C(4,1)$ and
  $C(5,3)$ against the runtime in seconds.}
\label{fig:runtime}
\end{figure}

\subsection{Application to motif counting}
\label{sec:application}

Motif analysis has been used as an important technique for analyzing complex networks \cite{ChuKwa2008}. Networks with identical global graph properties, such as global clustering coefficient and diameter, may have different local structures \cite{MilShe2002}. The number of occurrences of each motif has been widely used for uncovering the local structure of the networks.  \cite{MilPrz2008} summarized a protein's local structure in a protein-protein interaction (PPI) network via the motif degree signature.
 \cite{SheVis2009} proposed efficient graph kernels based on counting or sampling motifs to characterize and compare graphs. In \cite{HalArt2008}, motif counts were used to build subgraph ratio profiles for comparing P2P networks with protein structure networks.

Our algorithm can also be applied to estimating the motif counts with the help of the network size estimator in \cite{HarKat2013}. 
As presented in Eqn. (\ref{eqn:final}), the number of subgraphs which are isomorphic to motif $M(k,m)$ equals the ratio of $ E[\omega(R_i^{(s)} \cup W_i^{(s)}, k, m)
  \frac{\prod_{j=1}^{s-2}d(r_{i-j})}{\phi(W_i^{(k-s)} | R_i^{(s)})}
  ]$ and $\left( \frac{1}{D} \right) \left(
  \frac{P_r(k,m,s) P_w(k,m,s)}{Z(k,m)} \right)$. The former can be estimated by running the WRW algorithm. $P_r(k,m,s)$, $P_w(k,m,s)$ and $Z(k,m)$ are known for all the motifs. $D$ is the sum of the degrees of all the nodes in $G$. We assume that the entire graph is not accessible, so the real value of $D$ remains unknown; however, we can estimate the value of $D$ via random walk.
  
Let $x_i  \in V$ denote the node visited in step $i$. Consider the expected value of $\frac{1}{d(x_i)} $ over the random walk:
  \begin{eqnarray}
E\left[\frac{1}{d(x_i)}\right] &=& \sum_{v \in V} \frac{d(v)}{D}\frac{1}{d(v)} =\frac{|V|}{D}
\label{eqn:appl-motif}
\end{eqnarray}
Eqn. (\ref{eqn:appl-motif}) shows that $D$ is equal to the ratio of the number of nodes in the entire graph and the expected value of $\frac{1}{d(x_i)}$. \cite{HarKat2013} presents a sampling method which estimates the number of nodes by counting the neighbor collision of node pairs in the random walk. 

Having the estimated value of $D$, WRW is capable of estimating the number of motifs of any type. Given a $k$-node motif $M(k,m)$, WRW can estimate the number of subgraphs which are isomorphic to $M(k,m)$ without performing any estimation of other types of $k$-node motifs. Other motif counting algorithms, such as \cite{JhaSes2015}, require the estimates of all non-star-like motifs in order to estimate the motif count of a star-like motif.

Table \ref{tab:D_est} reports the accuracy of the estimates of $D$, and Table \ref{tab:motif_counts} shows the relative errors in the estimates of selected 4,5-node motif counts. As mentioned in Section \ref{subsec:acc}, in most of the cases, the star motifs  have the largest concentration, while the clique motifs have the smallest concentration. So we choose to present the estimates for $|\bf{S}(4,1)|$ and $|\bf{S}(4, 6)|$ in case of 4-node motifs, and $|\bf{S}(5, 3)|$ and $|\bf{S}(5, 21)|$ in case of 5-node motifs. As presented in Table \ref{tab:motif_counts}, even with only hundreds of nodes queried, our method still has a relative error lower than 10\%.
 
\begin{table}[!t]
\centering
\caption{The relative errors in the estimates of the sum of degrees of all nodes ($D$). Only 1\% of the nodes is mined.\label{tab:D_est}}{
\centering
\begin{tabular}{c|c|c|c} 
 \Xhline{1pt}
Graph  &Sample size & $D$& Relative error (\%) \\  \Xhline{1pt}
soc-Slashdot & 774&938,360&0.5753\\ \hline
socfb-Penn94 &415 &2,724,440& 0.6748 \\ 
\hline\end{tabular}}
\end{table}

\begin{table}[!t]
\centering
\caption{The relative errors in the estimates of the number of selected 4,5-node motifs when 1\% of the nodes is mined.\label{tab:motif_counts}}{
\begin{tabular}{c|c|c|c|c} 
 \Xhline{1pt}
 \multicolumn{5}{c}{ 4-node motifs}\\  \Xhline{1pt}
Graph  & $|\bf{S}(4,1)|$& Relative error (\%)& $|\bf{S}(4,6)|$& Relative error (\%) \\   \Xhline{1pt}
soc-Slashdot &1.49e+10&3.9656&1.99e+06&4.1792\\ \hline
socfb-Penn94 &5.69e+10&4.0676&3.13e+07&0.3927\\ \Xhline{1pt}
 \multicolumn{5}{c}{ 5-node motifs}\\  \Xhline{1pt}
  Graph & $|\bf{S}(5,3)|$& Relative error (\%)& $|\bf{S}(5,21)|$& Relative error (\%) \\   \Xhline{1pt}
soc-Slashdot &5.73e+12&5.6742&1.07e+07& 9.2719\\ \hline
socfb-Penn94 &3.86e+13& 0.6748&1.44e+08&8.0219 \\ 
\hline\end{tabular}
}
\end{table}

\section{Conclusions}
\label{sec:conclusion}
This paper demonstrates a simple approach, based on a random walk, to
collect and estimate the motif statistics of a large graph by sampling
only a small fraction of the motifs in the graph. The algorithm,
called Waddling Random Walk, is significantly faster than other known algorithms
that address both the computational and access challenges in the
subgraph mining of large graphs. 
The key feature of WRW that
contributes to its speed is the fact that it avoids any enumeration of
subgraphs, relying instead on a randomized protocol to sample
subgraphs. Further, WRW also avoids a dependence on the node degree in
the critical path of its computations and therefore, performs
particularly well on graphs with large average node-degree.
Besides
the improvement in speed, the algorithm is also more accurate (its
estimates are closer to the correct answer) and more precise (its
estimates have low variance) than the best of previously known
algorithms. 

A powerful feature of our methodology is that it also offers a
generalized approach which can be customized to optimize for specific
motifs of interest. In fact, the theoretical rationale used for our
approach only requires that our method of waddling used in the
algorithm be a randomized protocol. Within this approach, there is
much room for improving the waddling protocol and we hope this paper
will form the foundation for new approaches leading to newer and
better algorithms in the computationally feasible analysis of large graphs.

\appendix
\setcounter{section}{1}
In this section, we present the proof of Lemma \ref{lem:bound1}.

 \begin{theorem}
 \label{thm:bound}
 (Theorem 3 \cite{ChuLam2012}) Let $M$ be an ergodic Markov chain with state space $[n]$ and stationary distribution
$\pi$. Let $T = T(\epsilon)$ be its $\epsilon$-mixing time for $\epsilon \leq 1/8$. Let $(x_1, . . . , x_t)$ denote a $t$-step random walk on
$M$ starting from an initial distribution $\varphi$ on $[n]$. For every $i \in [t]$, let $f_i
: [n] \rightarrow [0, 1]$ 
be a weight function at step $i$ such that the expected weight $E_{x\leftarrow \pi}[f_i(x)] = \mu $ for all $i$. Define the
total weight of the walk $(x_1, . . . , x_t)$ by $X \triangleq\sum_{i=1}^t f_i(x_i)$. There exists some constant $c$ (which is
independent of $\mu$, $\epsilon$ and $\delta$) such that
\begin{eqnarray*}
\mathrm{Pr} \left[  \left|\frac{X}{t}-\mu \right| > \delta \mu \right] \leq c||\varphi||_\pi e^{-\delta^2\mu t/(72 T)}
\end{eqnarray*}
where  $0< \delta<1$.
\end{theorem}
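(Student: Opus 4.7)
The plan is to view $\hat{C}(k,m)$ as the ratio of two random-walk averages of bounded weights and invoke Theorem~\ref{thm:bound} on the numerator and denominator separately. Setting $s = L(k,m)$, for each step $i$ past the mixing time define the per-step weight
\begin{equation*}
f_i^{(m)} \;=\; \frac{Z(k,m)}{P_r(k,m,s)\,P_w(k,m,s)}\,\omega(R_i^{(s)}\cup W_i^{(k-s)},k,m)\,\frac{\prod_{j=1}^{s-2}d(r_{i-j})}{\phi(W_i^{(k-s)}\mid R_i^{(s)})},
\end{equation*}
so that, by Eqn.~(\ref{eqn:final}), $E[f_i^{(m)}] = |{\bf S}(k,m)|/D$. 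Writing $c_j = \sum_i f_i^{(j)}$ for every motif type $j$, the WRW estimator is $\hat{C}(k,m) = c_m / \sum_j c_j$ while $C(k,m) = |{\bf S}(k,m)| / \sum_j |{\bf S}(k,j)|$; thus the estimator is a ratio of two unbiased random-walk averages, and its relative error is controlled by the simultaneous relative errors of the two pieces.

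To apply Theorem~\ref{thm:bound} I must first rescale each $f_i^{(m)}$ into $[0,1]$. Because $\omega\in\{0,1\}$, the product $\prod_{j=1}^{s-2}d(r_{i-j})$ contains at most $k-2$ degree factors, and $1/\phi(W_i^{(k-s)}\mid R_i^{(s)})$ is at most the product of the degrees of the at most $k-s$ anchor nodes from which waddle steps are drawn. The total product therefore involves at most $k$ distinct degrees and is uniformly bounded by $Q$, the product of the top $k$ degrees in $G$ (after absorbing the motif-dependent constants $Z/(P_rP_w)$). Setting $\tilde f_i^{(m)} = f_i^{(m)}/Q \in [0,1]$ with mean $\tilde\mu_m = |{\bf S}(k,m)|/(QD)$, Theorem~\ref{thm:bound} gives
\begin{equation*}
\Pr\!\left[\left|\tfrac{c_m}{t} - \tfrac{|{\bf S}(k,m)|}{D}\right| > \delta\,\tfrac{|{\bf S}(k,m)|}{D}\right] \;\le\; c\,\|\varphi\|_\pi\,\exp\!\left(-\tfrac{\delta^2\,|{\bf S}(k,m)|\,t}{72\,T\,Q\,D}\right).
\end{equation*}
Choosing $t \ge \xi\,TDQ/(|{\bf S}(k,m)|\,\delta^2)\,\log(1/\alpha)$ for a sufficiently large absolute constant $\xi$ (which absorbs the factor $72$ and the additive $\log(c\|\varphi\|_\pi)$ term) drives this probability below $\alpha$. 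The identical argument applied to the total weight $\sum_j c_j$, whose mean is at least $|{\bf S}(k,m)|/D$ and which is also bounded coordinate-wise by a constant multiple of $Q$, yields an analogous concentration bound with the same sample-complexity $t$.

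A union bound then guarantees that, with probability at least $1-2\alpha$, the numerator lies in $(1\pm\delta)|{\bf S}(k,m)|/D$ and the denominator in $(1\pm\delta)\sum_j|{\bf S}(k,j)|/D$ simultaneously. On that event,
\begin{equation*}
\frac{1-\delta}{1+\delta}\,C(k,m) \;\le\; \hat{C}(k,m) \;\le\; \frac{1+\delta}{1-\delta}\,C(k,m),
\end{equation*}
and the elementary identities $\tfrac{1+\delta}{1-\delta}=1+\tfrac{2\delta}{1-\delta}$ and $\tfrac{1-\delta}{1+\delta}=1-\tfrac{2\delta}{1+\delta}$ rewrite this as exactly the two-sided bound stated in the lemma.

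The ratio-to-interval manipulation in the last step and the black-box application of Theorem~\ref{thm:bound} are routine; the main obstacle is the uniform $Q$-bound on $f_i^{(m)}$, since the raw weight contains the unbounded factor $1/\phi(W_i^{(k-s)}\mid R_i^{(s)})$. That bound rests on the specific structure of WRW's waddle protocol — each waddled node is sampled uniformly from a single neighbor set, so $1/\phi$ factorizes into individual degrees — and on the fact that only $O(k)$ distinct degrees ever enter the product, allowing it to be dominated by the product of the top $k$ degrees in the graph. Once that reduction is in place, the rest of the proof is a standard Chernoff-plus-union-bound argument.
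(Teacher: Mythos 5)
The statement you were asked to prove is Theorem~\ref{thm:bound} itself --- the Chernoff--Hoeffding-type concentration bound for a $t$-step random walk on an ergodic Markov chain, quoted from Theorem~3 of \cite{ChuLam2012}. Your proposal does not prove this statement: it \emph{invokes} Theorem~\ref{thm:bound} as a black box on the numerator and denominator of $\hat{C}(k,m)$ and derives a two-sided bound on the motif concentration (your own closing words, ``the two-sided bound stated in the lemma,'' give this away). What you have written is an argument for Lemma~\ref{lem:bound1}, and as an argument for Theorem~\ref{thm:bound} it is circular. A genuine proof of Theorem~\ref{thm:bound} would have to establish the exponential tail $c\|\varphi\|_\pi e^{-\delta^2\mu t/(72T)}$ from the mixing properties of the chain --- for instance via the perturbation analysis of the transition operator, or the reduction to nearly independent blocks separated by $O(T)$ steps, that \cite{ChuLam2012} carries out. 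None of that machinery appears in your write-up. (For calibration: the paper does not prove this theorem either; it states it as an imported result and uses it as the foundation of the appendix, which is precisely why the argument you reconstructed belongs downstream of it.)

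Read instead as a proof of Lemma~\ref{lem:bound1}, your argument tracks the paper's appendix almost step for step: rescale the per-step weight by $Q$ so that Theorem~\ref{thm:bound} applies (Lemma~\ref{lem:bound}), apply the same bound to the total weight $c_t=\sum_j c_j$ (Lemma~\ref{lem:bound2}), take a union bound, and finish with the identities $\frac{1+\delta}{1-\delta}=1+\frac{2\delta}{1-\delta}$ and $\frac{1-\delta}{1+\delta}=1-\frac{2\delta}{1+\delta}$. The one substantive addition --- justifying that the raw weight, including the unbounded-looking factor $1/\phi(W_i^{(k-s)}\mid R_i^{(s)})$, is uniformly dominated by (a constant times) $Q$ because it factors into at most $k$ node degrees --- is a worthwhile observation that the paper glosses over. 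But it does not change the verdict: the theorem you were assigned remains unproved.
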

The above theorem, presented in \cite{ChuLam2012}, provides the theoretical foundation of our proof.

\begin{lemma}
\label{lem:bound}
There exists a constant $\xi$, such that for $t \geq \xi \frac{TDQ}{|{\bf{S}}(k,m)| \delta^2} \log{\frac{1}{\alpha}}$, we have
\begin{eqnarray*}
\mathrm{Pr} \left[ (1- \delta) \frac{|{\bf{S}}(k,m)|}{D} \leq\frac{c_m}{t}\leq (1+\delta) \frac{|{\bf{S}}(k,m)|}{D} \right] >1- \alpha
\end{eqnarray*}
\end{lemma}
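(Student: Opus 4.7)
The plan is to match the empirical sum $c_m$ against the framework of Theorem \ref{thm:bound} (the Chung--Lam concentration bound). Recall that after the mixing time the algorithm increments $c_m$ at step $i$ by a quantity that, up to the known constant $P_r(k,m,s)P_w(k,m,s)/Z(k,m)$, equals
\[
g_i \;=\; \omega\bigl(R_i^{(s)}\cup W_i^{(k-s)},k,m\bigr)\,\frac{\prod_{j=1}^{s-2} d(r_{i-j})}{\phi\bigl(W_i^{(k-s)}\mid R_i^{(s)}\bigr)}.
\]
By Equation (\ref{eqn:final}), under the stationary distribution, $E[g_i]=\frac{P_r P_w}{Z D}|{\bf S}(k,m)|$, which after dividing by $P_r P_w/Z$ gives the target mean $|{\bf S}(k,m)|/D$. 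So the statement of Lemma \ref{lem:bound} is exactly a concentration statement about the empirical average of $g_i$ across the $t$ steps.

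The main obstacle is that Theorem \ref{thm:bound} requires weight functions $f_i:[n]\to[0,1]$, and $g_i$ is both unbounded a priori and depends not just on the Markov chain state $r_i$ but also on the fresh waddle randomness $W_i^{(k-s)}$. I would address both issues at once by enlarging the state space: let the chain's state at step $i$ be the tuple $(r_{i-s+1},\dots,r_i,W_i^{(k-s)})$, which is still an ergodic Markov chain on $G$ (the waddle is a bounded-memory, graph-determined randomization and does not affect the mixing time of the underlying walk up to a constant factor). On this extended state space, set
\[
f_i \;=\; \frac{Z(k,m)}{P_r(k,m,s)\,P_w(k,m,s)\,Q}\,g_i.
\]
Because $\prod_{j=1}^{s-2} d(r_{i-j})\le Q$ and $1/\phi(W_i^{(k-s)}\mid R_i^{(s)})\le Q$ (each waddle probability factor is at least $1/d_{\max}$, absorbed into $Q$ by taking $Q$ as the product of the top $k$ degrees), one verifies $f_i\in[0,1]$, while $\mu:=E_\pi[f_i]=|{\bf S}(k,m)|/(DQ)$. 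Then $c_m/t$ equals $(P_rP_w/Z)\cdot(X/t)$ in the notation of Theorem \ref{thm:bound}, so the event in Lemma \ref{lem:bound} translates directly to $|X/t-\mu|\le\delta\mu$.

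Applying Theorem \ref{thm:bound} gives
\[
\Pr\!\left[\left|\tfrac{X}{t}-\mu\right|>\delta\mu\right]\;\le\;c\,\|\varphi\|_\pi\,\exp\!\left(-\tfrac{\delta^2\mu t}{72\,T}\right).
\]
Forcing the right-hand side to be at most $\alpha$ and solving for $t$ yields
\[
t \;\ge\; \frac{72\,T}{\delta^2\,\mu}\log\!\frac{c\,\|\varphi\|_\pi}{\alpha}
\;=\;\frac{72\,T\,D\,Q}{\delta^2\,|{\bf S}(k,m)|}\log\!\frac{c\,\|\varphi\|_\pi}{\alpha},
\]
which is of the promised form once the constants $72$ and $c\|\varphi\|_\pi$ (the latter depending only on the starting distribution, which we take to be the stationary distribution post-mixing, so $\|\varphi\|_\pi=1$) are absorbed into $\xi$. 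This completes the plan.

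A small but important check I would include is that the extended-state chain defined above inherits the $\epsilon$-mixing bound of the underlying walk up to a constant, since the waddle variables $W_i^{(k-s)}$ are drawn independently at each step conditional on $R_i^{(s)}$, so the total variation distance to the extended stationary distribution is dominated by that of the walk on $G$. Once this is in place, the two one-sided inequalities in the statement of Lemma \ref{lem:bound1} follow by taking $\delta$ small enough so that $(1-\delta)\le 1-\delta$ and $(1+\delta)\le 1+2\delta/(1-\delta)$, i.e., by a routine algebraic rearrangement, and a union bound over the numerator and the independent estimate of $D$ gives the factor $2\alpha$ in Lemma \ref{lem:bound1}.
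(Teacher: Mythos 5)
Your proof matches the paper's argument in all essentials: both scale the per-step increment $g_i$ by $Z(k,m)/\bigl(P_r(k,m,s)\,P_w(k,m,s)\,Q\bigr)$ to produce a $[0,1]$-valued weight function $f_i$ with stationary mean $\mu=|{\bf S}(k,m)|/(DQ)$, invoke Theorem~\ref{thm:bound} with $\|\varphi\|_\pi=1$, and solve $c\,e^{-\delta^2\mu t/(72T)}\le\alpha$ for $t$. You go further than the paper by explicitly noting that $f_i$ is not a function of a single state of the underlying walk — it depends on the $s$-node window $R_i^{(s)}$ and the fresh waddle randomness $W_i^{(k-s)}$ — and you handle this by lifting to an augmented chain; this is a legitimate technical gap that the paper's proof silently steps over, so your treatment is, if anything, more careful (though the assertion that the augmented chain inherits the mixing bound ``up to a constant factor'' would itself need a short justification). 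One small imprecision in your verification of $f_i\in[0,1]$: from $\prod_{j=1}^{s-2}d(r_{i-j})\le Q$ and $1/\phi(W_i^{(k-s)}\mid R_i^{(s)})\le Q$ individually, one gets only $g_i\le Q^2$, which does not suffice after dividing by a single factor of $Q$. The correct observation is that $g_i$ is a product of exactly $(s-2)+(k-s)=k-2$ degree factors, hence $g_i\le Q$ because $Q$ is the product of the top $k$ degrees and all degrees are at least one; then $f_i\le Z(k,m)/\bigl(P_r(k,m,s)P_w(k,m,s)\bigr)\le 1$. With that repair the proof is sound and is the paper's own route.
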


\begin{proof} 
Define $Q$ as the product of the top $k$ degrees in $ G$. Let
\[
f_i=\frac{1}{Q}[\omega(R_i^{(s)} \cup W_i^{(s)}, k, m)\frac{\prod_{j=1}^{s-2}d(r_{i-j})}{\phi(W_i^{(k-s)} | R_i^{(s)})}]
  \left(
  \frac{Z(k,m)}{P_r(k,m,s) P_w(k,m,s)} \right).
  \]
This function comes from our estimator in Eqn. (\ref{eqn:final}). Suppose that the random walk starts from a stationary distribution $\pi$, and thus we have $||\varphi||_\pi =1$. The expected value $E[f_i]  = \frac{|{\bf{S}}(k,m)|}{DQ}$.
 Applying Theorem \ref{thm:bound},
 \begin{eqnarray*}
&\mathrm{Pr} \left[  \left|\frac{c_m}{tQ}-\frac{|{\bf{S}}(k,m)|}{DQ} \right| > \delta \frac{|{\bf{S}}(k,m)|}{DQ} \right] \\ 
=& ~~~~ \mathrm{Pr} \left[  \left|\frac{c_m}{t}-\frac{|{\bf{S}}(k,m)|}{D} \right| > \delta \frac{|{\bf{S}}(k,m)|}{D} \right]
\leq ce^{-\delta^2\frac{|{\bf{S}}(k,m)|t}{72DQT}} 
\end{eqnarray*}
As a reminder, we use $T_k$ different temporary variables, $c_m$ for $ 1 \leq m\leq T_k$, to record the $T_k$ motif concentrations (see Algorithm~\ref{alg:motifalg}). Thus, we have 
 \begin{eqnarray*}
 c_m=\sum_{i=1}^t f_i
  \end{eqnarray*}
Taking the expectation of $c_m$, 
 \begin{eqnarray*}
E\left[\frac{c_m}{t}\right]&=& E\left[\omega(R_i^{(s)} \cup W_i^{(s)}, k, m)
  \frac{\prod_{j=1}^{s-2}d(r_{i-j})}{\phi(W_i^{(k-s)} | R_i^{(s)})}
  \right]  \left(
  \frac{Z(k,m)}{P_r(k,m,s) P_w(k,m,s)} \right) \\
  &=&\frac{|{\bf{S}}(k,m)|}{D}
  \end{eqnarray*}
Let $\alpha = ce^{-\delta^2\frac{|{\bf{S}}(k,m)|t}{72DQT}}$, and thus we have $t \geq \xi \frac{TDQ}{|{\bf{S}}(k,m)| \delta^2} \log{\frac{1}{\alpha}}$.
 \end{proof}
 
 \begin{lemma}
 \label{lem:bound2}
 There exists a constant $\xi$, such that for $t \geq \xi \frac{TDQ}{\sum_{m=1}^{T_k}|{\bf{S}}(k,m)| \delta^2} \log{\frac{1}{\alpha}}$, we have
\begin{eqnarray*}
\mathrm{Pr} \left[ (1-\delta) \sum_{m=1}^{T_k}\frac{|{\bf{S}}(k,m)|}{D} \leq \frac{c_t}{t}\leq (1+ \delta) \sum_{m=1}^{T_k}\frac{|{\bf{S}}(k,m)|}{D} \right] > 1-\alpha
\end{eqnarray*}
\end{lemma}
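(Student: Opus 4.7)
The plan is to mirror the proof of Lemma \ref{lem:bound} but apply Theorem \ref{thm:bound} just once to a single combined weight function whose running sum equals (a rescaling of) $c_t$, rather than applying Lemma \ref{lem:bound} separately to each $c_m$ and union-bounding. The union-bound route would replace $\sum_m |{\bf S}(k,m)|$ by $\min_m |{\bf S}(k,m)|$ in the denominator of the required $t$ and tack on a spurious $\log T_k$ factor, yielding a strictly weaker bound than what the lemma claims.

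Concretely, let $f_i^{(m)}$ denote the step-$i$ weight function built in the proof of Lemma \ref{lem:bound} for motif $M(k,m)$, so that $c_m = Q\sum_{i=1}^{t} f_i^{(m)}$ and $E_\pi[f_i^{(m)}] = |{\bf S}(k,m)|/(DQ)$. I would define the combined weight
\begin{equation*}
g_i \;\triangleq\; \frac{1}{T_k}\sum_{m=1}^{T_k} f_i^{(m)}.
\end{equation*}
Each $f_i^{(m)}$ lies in $[0,1]$ by the $1/Q$ normalization already used in Lemma \ref{lem:bound}, so $g_i \in [0,1]$ as well and is a valid weight function for Theorem \ref{thm:bound}. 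Linearity gives $\mu \triangleq E_\pi[g_i] = \frac{1}{T_k DQ}\sum_{m=1}^{T_k} |{\bf S}(k,m)|$, and summing over $i$ yields $\sum_{i=1}^{t} g_i = c_t/(T_k Q)$. Assuming the walk starts from $\pi$ so that $\|\varphi\|_\pi = 1$, Theorem \ref{thm:bound} then gives
\begin{equation*}
\mathrm{Pr}\!\left[\left|\frac{c_t}{t\,T_k Q}-\mu\right| > \delta\mu\right] \;\leq\; c\, e^{-\delta^2 \mu t/(72T)}.
\end{equation*}
Multiplying the event inside the absolute value through by $T_k Q$ converts this into the desired two-sided deviation of $c_t/t$ from $\sum_m |{\bf S}(k,m)|/D$. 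Solving $c\, e^{-\delta^2 \mu t/(72T)} \leq \alpha$ for $t$ gives $t \geq \frac{72\, T\, T_k DQ}{\delta^2 \sum_m |{\bf S}(k,m)|}\log(c/\alpha)$, and absorbing the graph-independent constants $T_k$, $72$, and $\log c$ into $\xi$ recovers the stated bound.

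The step I expect to require the most care is the boundedness claim $f_i^{(m)} \in [0,1]$, which is used implicitly in the proof of Lemma \ref{lem:bound} but never fully justified there. One must check that $Q$ uniformly dominates $\frac{Z(k,m)}{P_r(k,m,s) P_w(k,m,s)}\cdot\frac{\prod_{j=1}^{s-2} d(r_{i-j})}{\phi(W_i^{(k-s)}\mid R_i^{(s)})}$ for every motif $m$ and every realizable $(R_i^{(s)},W_i^{(k-s)})$. When the waddle protocol picks each of its $k-s$ nodes uniformly from the neighborhood of a walk-visited node, $1/\phi(W_i^{(k-s)}\mid R_i^{(s)})$ is a product of at most $k-s$ degrees, so the numerator consists of a product of at most $k-2$ degree factors (each bounded by the maximum degree in $G$) times an $O(1)$ combinatorial factor depending only on $k$. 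That product is bounded by $Q$ up to a $k$-only constant, which can again be absorbed into $\xi$, after which the rest of the argument is routine bookkeeping.
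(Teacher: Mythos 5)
Your proof is correct and is the natural detailed realization of the paper's one-line ``similar to the one in Lemma \ref{lem:bound}'': apply Theorem \ref{thm:bound} once to the combined weight $g_i = T_k^{-1}\sum_{m=1}^{T_k} f_i^{(m)}$, whose running sum rescales $c_t$, rather than applying Lemma \ref{lem:bound} to each motif type and union-bounding (which, as you correctly note, would replace $\sum_m |{\bf S}(k,m)|$ by $\min_m |{\bf S}(k,m)|$ and add a $\log T_k$, giving a strictly weaker conclusion than what the lemma states). Your caveat about verifying $f_i^{(m)} \in [0,1]$ is fair --- the paper glosses over this in Lemma \ref{lem:bound} as well --- but, as you observe, the slack is only a $k$-dependent combinatorial constant that can be absorbed into $\xi$.
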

Note $c_t = \sum^{T_k}_{j=1} {c_j}$.
\begin{proof} 
The proof is similar to the one in Lemma \ref{lem:bound}. 
\end{proof}

Applying Lemma \ref{lem:bound} and \ref{lem:bound2}, we can find that when the number of steps  $t \geq \xi \frac{TDQ}{|{\bf{S}}(k,m)| \delta^2} \log{\frac{1}{\alpha}}$,  the relative errors of the estimates of $\frac{|{\bf{S}}(k,m)|}{D}$ and $\sum_{m=1}^{T_k}\frac{|{\bf{S}}(k,m)|}{D}$ are at most $\delta$ with probability greater than  $1-\alpha$. Thus, there exists a constant $\xi$, such that for $t \geq \xi \frac{TDQ}{|{\bf{S}}(k,m)| \delta^2} \log{\frac{1}{\alpha}}$, we have

\begin{eqnarray*}
\mathrm{Pr} \left[\left( \frac{1-\delta}{1+\delta}\right) \frac{|{\bf{S}}(k,m)|}{\sum_{m=1}^{T_k}|{\bf{S}}(k,m)|} \leq \frac{c_m}{c_t}\leq \left( \frac{1+\delta}{1-\delta}\right) \frac{|{\bf{S}}(k,m)|}{\sum_{m=1}^{T_k}|{\bf{S}}(k,m)|} \right] > 1-2\alpha
\end{eqnarray*}
Noting that $C(k,m)=|{\bf{S}}(k,m)|/(\sum_{m=1}^{T_k}|{\bf{S}}(k,m)|)$,
\begin{eqnarray*}
\mathrm{Pr} \left[\left(1- \frac{2\delta}{1+\delta}\right)C(k,m) \leq \frac{c_m}{c_t}\leq \left(1+ \frac{2\delta}{1-\delta}\right)C(k,m)\right] > 1-2\alpha
\end{eqnarray*}
This proves Lemma \ref{lem:bound1}.



\bibliographystyle{IEEETran}
\bibliography{refs}
\end{document}